\theoremstyle{plain}
\newtheorem{theorem}{Theorem}
\newtheorem{lemma}{Lemma}
\newtheorem{proposition}{Proposition}
\newtheorem{corollary}{Corollary}
\theoremstyle{definition}
\newtheorem{definition}{Definition}
\newtheorem{remark}{Remark}
\newtheorem{fact}{Fact}
\begin{document}
	\begin{frontmatter}
		\title{Quantum and Classical Query Complexities for Generalized Simon's Problem}
		\author{Zhenggang Wu$^{1,2}$}
		\author{Daowen Qiu$^{1, 2,}$\corref{one}}
		\author{Jiawei Tan$^{1,2}$}		
		\author{Hao Li$^{1,2}$}		
		\author{Guangya Cai$^{1}$}
		\cortext[one]{Corresponding author (D. Qiu). {\it E-mail addresses:} issqdw@mail. sysu. edu. cn (D. Qiu)}
		\address{
		$^1$ Institute of Quantum Computing and Computer Theory, School of  Computer Science and Engineering, Sun Yat-sen University, Guangzhou 510006, China \\
		$^2$The Guangdong Key Laboratory of Information Security Technology, Sun Yat-sen University, 510006, China\\}
		%$^3$ Instituto de Telecomunica\c{c}\~{o}es, Departamento de Matem\'{a}tica, Instituto Superior T\'{e}cnico, Av. Rovisco Pais 1049-001, Lisbon, Portugal\\}
		
		\begin{abstract}
			\emph{Simon's problem} is an essential example demonstrating the faster speed of quantum computers than classical computers for solving some problems. 
			The optimal separation between exact quantum and classical query complexities for Simon's problem has been proved by Cai $\&$ Qiu.
			Generalized Simon's problem can be described as follows. Given a function $f:{\{0, 1\}}^n \to {\{0, 1\}}^m$, 
			with the property that there is some unknown hidden subgroup
			$S$ such that $f(x)=f(y)$ iff $x \oplus y\in S$, for any $x, y\in {\{0, 1\}}^n$, where $|S|=2^k$ for some $0\leq k\leq n$. 
			The goal is to find $S$. For the case of $k=1$, it is Simon's problem. 
			In this paper, we propose an exact quantum algorithm with $O(n-k)$ queries and an non-adaptive deterministic classical algorithm with $O(k\sqrt{2^{n-k}})$ queries for solving the generalized Simon's problem. Also, we prove that their lower bounds are $\Omega(n-k)$ and $\Omega(\sqrt{k2^{n-k}})$, respectively. 
			Therefore, we obtain a tight exact quantum query complexity  $\Theta(n-k)$ and an almost tight non-adaptive classical deterministic query complexities $\Omega(\sqrt{k2^{n-k}}) \sim O(k\sqrt{2^{n-k}})$ 
			for this problem.
		\end{abstract}

		\begin{keyword}
			Quantum computing \sep Exact query complexity \sep Generalized Simon's problem \sep Dimensional reduction \
		\end{keyword}
	\end{frontmatter}

%\linenumbers
\section{Introduction}\label{introduction}

	The quantum query models are proven to be more powerful than their classical counterparts \cite{BW02}. 
	A {\em quantum query algorithm} is the implementation procedure of a quantum query model as follows.
	It starts with a fixed starting state $|\psi_s\rangle$ of a Hilbert ${\cal H}$ and will perform the sequence of operations $U_0, O_x, U_1,  \ldots, O_x,U_t$, 
	where $U_i$'s are unitary operators that do not depend on the input $x$, but the query $O_x$  does. This leads to the final state $ |\psi_f\rangle=U_tO_xU_{t-1}\cdots U_1O_xU_0|\psi_s\rangle$. 
	The output is obtained by measuring the final state  $ |\psi_f\rangle$.

	A quantum query algorithm ${\cal A}$ {\em exactly computes} a function $f$ if its output equals to $f(x)$ with probability 1, 
	for all input $x$. ${\cal A}$ computes $f$ {\em with bounded-error} if its output equals to $f(x)$ with probability at least $\frac{2}{3}$, for all input $x$. 
	The {\em  exact quantum query complexity} denoted by $Q_E(f)$ is the minimum number of queries used by any quantum algorithm which
	computes $f(x)$ exactly for all input $x$.

	Simon's problem conceived by Simon in 1994 \cite{Simon1994} is in the model of decision tree complexity or query complexity and it is a famous computational problem 
	that achieves exponential separation in query complexities. This problem can be defined as:
	Given a function $f:{\{0, 1\}}^n \to {\{0, 1\}}^m$, with the property that there is some unknown nonzero $s\in {\{0, 1\}}^n$ such that $f(x)=f(y)$ iff $x \oplus y\in \{0^n,s\}$, for any $x, y\in {\{0, 1\}}^n$, where zero means $0^n$. The goal is to find $s$.

	In the bounded-error setting, Simon gave an elegant quantum algorithm which solves the problem
	with $O(n)$ queries and the physical realization has demonstrated its efficiency \cite{Tame2014}.
	The $\Omega(n)$ lower bound was proved in \cite{Koiran2007} by using polynomial method \cite{Beals2001}.
	On the other hand, the classical randomized query complexity for this problem is $\Theta(\sqrt{2^{n}})$ \cite{Wolf2013}, which shows that
	the $\Theta(n)$ versus $\Theta(\sqrt{2^{n}})$ separation is an optimal one.

	For the exact query complexities of Simon's problem, Brassard $\&$ H{\o}yer \cite{Brassard1997} first gave an exact quantum algorithm solving the problem with $O(n)$ queries. 
	Then Mihara and Sung \cite{Mihara2003} proposed a simpler exact   quantum   algorithm with $O(n)$ queries in terms of a novel oracle. Recently, Cai $\&$ Qiu \cite{CQ2018} presented a straightforward exact quantum algorithm for solving Simon's problem with $O(n)$ queries.  In particular, they first gave a classical deterministic algorithm with $O(\sqrt{2^{n}})$ queries. Therefore, the optimal separation in the exact query complexities for Simon's problem is $\Theta(n)$ versus $\Theta(\sqrt{2^{n}})$.

	Moreover, Simon's problem over the general group and Simon's problem for linear functions have been studied in \cite{Alagic2008, Apeldoorn2018}. 
	Alagic $\&$ {\it al.} \cite{Alagic2008} investigated the Simon's Problem over a general group $K$, with the promise being changed, 
	and designed a quantum algorithm with time complexity $2^{O(\sqrt{n\log n})}$. 
	Apeldoorn $\&$ {\it al.} \cite{Apeldoorn2018} investigated the Simon's problem for linear functions over $\mathbb{F}_p$, 
	where $p$ is a prime power and $\mathbb{F}_p$ is a finite field with $p$ elements, and they showed the lower bound is $\Omega(n)$.

    Generalized Simon's problem proposed in \cite{KLM2007} is a generalization of Simon's problem, which can be described as follows. 
	Given a function $f:{\{0, 1\}}^n \to {\{0, 1\}}^m$, promised to satisfy the property that, 
	for some subgroup $S\subseteq {\{0, 1\}}^n$, we have, for any $x, y\in {\{0, 1\}}^n$, $f(x)=f(y)$ if and only if $x \oplus y\in S$, 
	where $|S|=2^k$ for some $0\leq k\leq n$. The goal is to find $S$. For the case of $k=1$, it is the Simon's problem.
    Computing the generalized Simon's problem   with bounded-error, 	the authors in \cite{KLM2007} gave an upper bound $O(n-k)$  on quantum query complexity with successful probability at least $\frac{2}{3}$. 
	However, we still do not know the exact quantum query complexity and classical deterministic query complexity for the generalized Simon's problem. 
	The optimal separation in exact quantum and classical deterministic query complexity for this problem needs to be clarified. 
	So, in this paper, we propose an exact quantum algorithm with $O(n-k)$ queries and a non-adaptive classical deterministic algorithm with $O(k\sqrt{2^{n-k}})$ queries for solving the generalized Simon's problem. 
	Then we show that the lower bounds on its exact quantum and non-adaptive classical deterministic query complexities are $\Omega(n-k)$ and $\Omega(\sqrt{k2^{n-k}})$ , respectively. 
	Therefore, we obtain the tight exact quantum query complexity  $\Theta(n-k)$, and the non-adaptive classical deterministic query complexities $\Omega(\sqrt{k2^{n-k}}) \sim O(k\sqrt{2^{n-k}})$ 
	for the generalized Simon's problem. When $k=1$, it accords with the results for Simon's problem obtained by Cai  and Qiu \cite{CQ2018}.

	This paper is organized as follows. In Section 2, we introduce several notations and the essential ideas for designing the exact quantum algorithms. 
	In Section 3 we present the lower bound and upper bound of exact quantum query complexity for the generalized Simon's problem. 
	Afterwards, in Section 4 we investigate the classical query complexity for this problem. Finally, conclusions are summarized in Section 5.

   % {\em Note added \  } Following the completion of this work, Ye $\&$ al\cite{Ye2019} present a novel classical deterministic algorithm with $O\left(max\{k,\sqrt{k \cdot p^{n-k}} \} \right)$ in $\mathbb{Z}_{p}^{n}$, an extension Abelian group for Simon's problem.
    
\section{Preliminaries}

	In this section, we present related definitions and notations, and give some properties of the generalized Simon's problem, 
	as well as provide the critical ideas of designing an exact quantum algorithm for the generalized Simon's problem. For more details, we can refer to  \cite{CQ2018}.

\subsection{Definitions and notations}

	Let $x, y\in{\{0, 1\}}^n$ with $x=(x_1, x_2, \cdots, x_n)$ and $y=(y_1, y_2, \cdots, y_n)$. By $x\oplus y$, we denote the bitwise exclusive-or operation, i.e.,

	\centerline{$x \oplus y=(x_1\oplus y_1, x_2 \oplus y_2, \cdots, x_n\oplus y_n)$. }

	By $x\cdot y$, we denote the inner product modulo 2 of $x$ and $y$, i.e.,

	\centerline{$x \cdot y=(x_1 y_1+x_2 y_2+\cdots+x_n y_n)$ mod 2. }

	Let $X\subseteq {\{0, 1\}}^n$. $X^{\perp}$ is a subset of ${\{0, 1\}}^n$ defined by\\
	\centerline{$X^{\perp}=\{y|\forall x\in X, x\cdot y=0\}$. }

	By $|X|$, we denote the cardinality of $X$, i.e., the number of elements of  $X$. As  \cite{CQ2018}, 
	the query set of $X$, denoted by $C_X$, is the subset of $X$, satisfying

	\centerline{$\forall x\in X, \exists y, z\in C_X, x=y\oplus z$. }

	If $X$ is a subgroup of $({\{0, 1\}}^n, \oplus)$, we denote the dimension of $X$ by $dim(X)$.

	Let $f:{\{0, 1\}}^n \to {\{0, 1\}}^m$. We use $ran(f)$ to denote the range of $f$, $dom(f)$ to denote the domain of $f$, $codom(f)$ to denote a codomain of $f$ ($codom(f)$ means some given 	set containing $ran(f)$, and here $codom(f)$  is $ {\{0, 1\}}^m$	). 
	$\mathcal{T}(f)\subseteq dom(f)$ is defined as: $ f(\mathcal{T}(f))=ran(f)$, and $ \forall x, y\in \mathcal{T}(f), f(x)\ne f(y)$.

	We use $[n]$ to denote an index set, i.e., $[n]=\{1, 2, \cdots, n\}$. 

\subsection{Generalized Simon's problem and some properties}\label{section2.2}

	The generalized Simon's problem $GSP(n,k)$ can be defined as follows:

	\textbf{Given:} $f: {\{0, 1\}}^n \to {\{0, 1\}}^m, S\subseteq {\{0, 1\}}^n, |S|=2^k, k\le n$.

	\textbf{Promise: } For all $x,y\in {\{0, 1\}}^n$, $f(x)=f(y)   \iff  (x \oplus y)\in S$.

	\textbf{Problem:} Find the hidden subgroup $S$.
    
    \begin{definition}
		Let $G={\{0, 1\}}^n$, and $\alpha_1, \cdots, \alpha_l \in G$. We call $\{\alpha_1, \cdots, \alpha_l\}$ as a linearly independent set of $G$ if and only if for any $a_1, \cdots, a_l\in\{0, 1\}$, 		$ a_1\alpha_1\oplus \cdots \oplus a_l\alpha_l=0^n\Longleftrightarrow a_1=\cdots=a_l=0$. 
	\end{definition}
	
	\begin{fact} \label{fact1}
	    $\forall s_i, s_j\in S, s_i\oplus s_j\in S$.
	\end{fact}

	\begin{fact}
		$(G, \oplus)$ is an Abelian group, and $(S, \oplus)$ is a subgroup of $(G, \oplus)$.
	\end{fact}

	\begin{lemma}\label{lemma1}
		Let $M=\{\alpha_1, \alpha_2, \cdots, \alpha_l\}$ be a linearly independent set of $G$. Denote $\braket{M}=\big\{ a_1\alpha_1\oplus \cdots \oplus a_l\alpha_l: a_1, \cdots, a_l\in\{0, 1\}\big\}$. Then $\braket{M}$ is a subgroup generated by $M$,
		and $|\braket{M}|=2^l$.
	\end{lemma}

	\begin{proof}
	    From the definition of $\braket{M}$ we can easily know that $\braket{M}$ is a subgroup of $G$.	For any ${a_1}^{(1)}\cdots {a_l}^{(1)},  {a_1}^{(2)} \cdots  {a_l}^{(2)}\in \{0, 1\}^n$ with ${a_1}^{(1)}\cdots {a_l}^{(1)}\neq {a_1}^{(2)} \cdots  {a_l}^{(2)}$, then $(a_1^{(1)}-a_1^{(2)})\alpha_1\oplus \cdots \oplus (a_l^{(1)}-a_l^{(2)})\alpha_l=0^n \Longrightarrow (a_1^{(1)}-a_1^{(2)})=\cdots=(a_l^{(1)}-a_l^{(2)})=0^n$.
		Therefore, $|\braket{M}|=2^l$.
	\end{proof}

	\begin{definition}
		Let $G_M$ be a subgroup of $G$, and let $M=\{\alpha_1, \alpha_2, \cdots, \alpha_l\}\subsetneq G_M$ be a linearly independent set of $G_M$. We define the dimension of $G_M$ equals to $|M|$, if $\braket{M}=G_M$.
	\end{definition}
	
	The following two theorems are trivial, and we omit the proofs.

	\begin{theorem}\label{theorem1}
		There exists a linearly independent set $M=\{\alpha_1, \alpha_2, \cdots, \alpha_k\}$ such that $\braket{M}=S$, and $dim(S)=|M|=k$.
	\end{theorem}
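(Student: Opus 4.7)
The plan is to construct the basis $M$ greedily and then use Lemma \ref{lemma3} to pin down its size. Since $S$ is a subgroup of $(\{0,1\}^n,\oplus)$, it is naturally an $\mathbb{F}_2$-vector space; the theorem is just the statement that every such vector space has a basis whose size matches $\log_2|S|$.

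First I would start with any non-identity element: since $|S|=2^k\ge 1$ and the case $k=0$ is trivial (take $M=\emptyset$, so $S=\{0^n\}=\braket{M}$ and $\dim(S)=0$), I may assume $k\ge 1$ and pick $\alpha_1\in S\setminus\{0^n\}$. Then I would iterate: given an already-constructed linearly independent set $M_i=\{\alpha_1,\dots,\alpha_i\}\subseteq S$, if $\braket{M_i}\ne S$, choose $\alpha_{i+1}\in S\setminus\braket{M_i}$ and set $M_{i+1}=M_i\cup\{\alpha_{i+1}\}$.

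Next I would verify that $M_{i+1}$ is still linearly independent. If there existed $a_1,\dots,a_{i+1}\in\{0,1\}$, not all zero, with $a_1\alpha_1\oplus\cdots\oplus a_{i+1}\alpha_{i+1}=0^n$, then $a_{i+1}=1$ is forced (otherwise $M_i$ would be linearly dependent), and rearranging yields $\alpha_{i+1}=a_1\alpha_1\oplus\cdots\oplus a_i\alpha_i\in\braket{M_i}$, contradicting the choice of $\alpha_{i+1}$. Hence each $M_i$ produced by the procedure is linearly independent and contained in $S$, and by Lemma \ref{lemma3} satisfies $|\braket{M_i}|=2^i$.

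Finally I would show the loop terminates at exactly $i=k$. Since $\braket{M_i}\subseteq S$, the process must stop once $|\braket{M_i}|=|S|=2^k$; because the cardinality doubles at each step (by Lemma \ref{lemma3}), this happens precisely when $i=k$. At that point $\braket{M_k}=S$, so by the preceding definition $\dim(S)=|M_k|=k$, completing the proof. The only slightly delicate point is the linear-independence step above, but it is immediate once one observes that $\alpha_{i+1}\notin\braket{M_i}$ forces the coefficient of $\alpha_{i+1}$ in any dependence to vanish; no deeper obstacle arises.
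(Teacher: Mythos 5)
Your proposal is correct and follows essentially the same greedy/inductive construction as the paper: extend a linearly independent subset of $S$ by picking $\alpha_{l+1}\in S\setminus\braket{M^{(l)}}$ and use Lemma \ref{lemma3} to count $|\braket{M^{(l)}}|=2^l$ until it reaches $2^k$. Your version is slightly more complete in that you explicitly verify that adjoining an element outside the span preserves linear independence, a step the paper merely asserts.
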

	
	\begin{theorem}
		Let $S$ be defined in the generalized Simon's problem, and $S^{\perp}=\{y: \forall s\in S, s\cdot y=0\}$. Then $S^{\perp}$ is a subgroup of $G$, $|S^{\perp}|=2^{n-k}$, and $dim(S^{\perp})=n-k$.
	\end{theorem}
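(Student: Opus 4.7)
The plan is to dispatch the three claims in order: subgroup structure, cardinality, and dimension. For the subgroup claim, I would simply observe that $0^n\in S^{\perp}$ since $s\cdot 0^n=0$ for every $s\in S$, and that if $y_1,y_2\in S^{\perp}$ then for every $s\in S$,
\begin{equation*}
s\cdot(y_1\oplus y_2)=(s\cdot y_1)\oplus(s\cdot y_2)=0\oplus 0=0,
\end{equation*}
so $y_1\oplus y_2\in S^{\perp}$. Since $(G,\oplus)$ is Abelian and every element is its own inverse, this closure argument is enough.

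For the cardinality claim, I would invoke Theorem \ref{theorem1} to obtain a linearly independent set $M=\{\alpha_1,\dots,\alpha_k\}$ with $\braket{M}=S$. The key observation is that $y\in S^{\perp}$ if and only if $\alpha_i\cdot y=0$ for every $i=1,\dots,k$: one direction is immediate, and for the converse any $s\in S$ is an $\mathbb{F}_2$-linear combination of the $\alpha_i$'s, so $s\cdot y$ collapses to a sum of zeros. Hence $S^{\perp}=\ker\phi$ where $\phi:\{0,1\}^n\to\{0,1\}^k$ is the $\mathbb{F}_2$-linear map $\phi(y)=(\alpha_1\cdot y,\dots,\alpha_k\cdot y)$. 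The technical core is to show $\phi$ is surjective; I would extend $\{\alpha_1,\dots,\alpha_k\}$ to a basis $\{\alpha_1,\dots,\alpha_n\}$ of $\mathbb{F}_2^n$, so that the $n\times n$ matrix whose rows are the $\alpha_i$'s is invertible, and then for any target $(b_1,\dots,b_k)\in\{0,1\}^k$ the padded vector $(b_1,\dots,b_k,0,\dots,0)$ has a preimage $y\in\{0,1\}^n$ satisfying $\phi(y)=(b_1,\dots,b_k)$. Counting cosets of $\ker\phi$ then yields $|S^{\perp}|=2^n/2^k=2^{n-k}$.

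Finally, the dimension claim follows immediately: the inductive basis-extension argument used to prove Theorem \ref{theorem1} relied only on $S$ being a subgroup of $\{0,1\}^n$ together with Lemma \ref{lemma3}, so applying it verbatim to the subgroup $S^{\perp}$ of order $2^{n-k}$ produces a linearly independent generating set of size exactly $n-k$, giving $dim(S^{\perp})=n-k$. The main obstacle in this program is the surjectivity of $\phi$; once the basis extension is in hand, every other step is a mechanical $\mathbb{F}_2$-linear-algebra verification.
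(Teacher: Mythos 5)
Your proof is correct, and it takes a genuinely different (and in fact more robust) route than the paper. The paper also begins with the same closure argument for the subgroup claim, but for the cardinality it extends the basis $M=\{\alpha_1,\dots,\alpha_k\}$ of $S$ to a basis $M\cup M'$ of $\{0,1\}^n$ and asserts that one can ``clearly verify'' $\braket{M'}=S^{\perp}$. That assertion is not valid for an arbitrary basis extension, and over $\mathbb{F}_2$ it can fail for \emph{every} extension, because the inner product is degenerate on subspaces: $S\cap S^{\perp}$ need not be trivial. For instance, with $n=2$ and $S=\braket{(1,1)}$ one has $(1,1)\cdot(1,1)=0$, so $S^{\perp}=S=\{00,11\}$, while any single vector $\beta_1$ completing $(1,1)$ to a basis generates either $\{00,10\}$ or $\{00,01\}$, neither of which is $S^{\perp}$. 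Your argument avoids this trap entirely: you identify $S^{\perp}$ as the kernel of the linear map $\phi(y)=(\alpha_1\cdot y,\dots,\alpha_k\cdot y)$, prove surjectivity of $\phi$ via the invertibility of the matrix whose rows extend $\{\alpha_1,\dots,\alpha_k\}$ to a basis, and count $|S^{\perp}|=2^n/2^k$ by cosets of the kernel; the dimension claim then follows by reapplying the basis-construction of Theorem \ref{theorem1} to the subgroup $S^{\perp}$. What your approach buys is a proof that does not depend on any orthogonality relation between $M$ and its extension, i.e., it actually establishes the statement the paper only gestures at; the paper's approach would be shorter if it were repaired (e.g., by constructing $M'$ inside $S^{\perp}$ directly), but as written yours is the one that closes the argument.
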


	\begin{figure}[h!]
		\begin{center}
			\includegraphics[width=16cm, height=7cm]{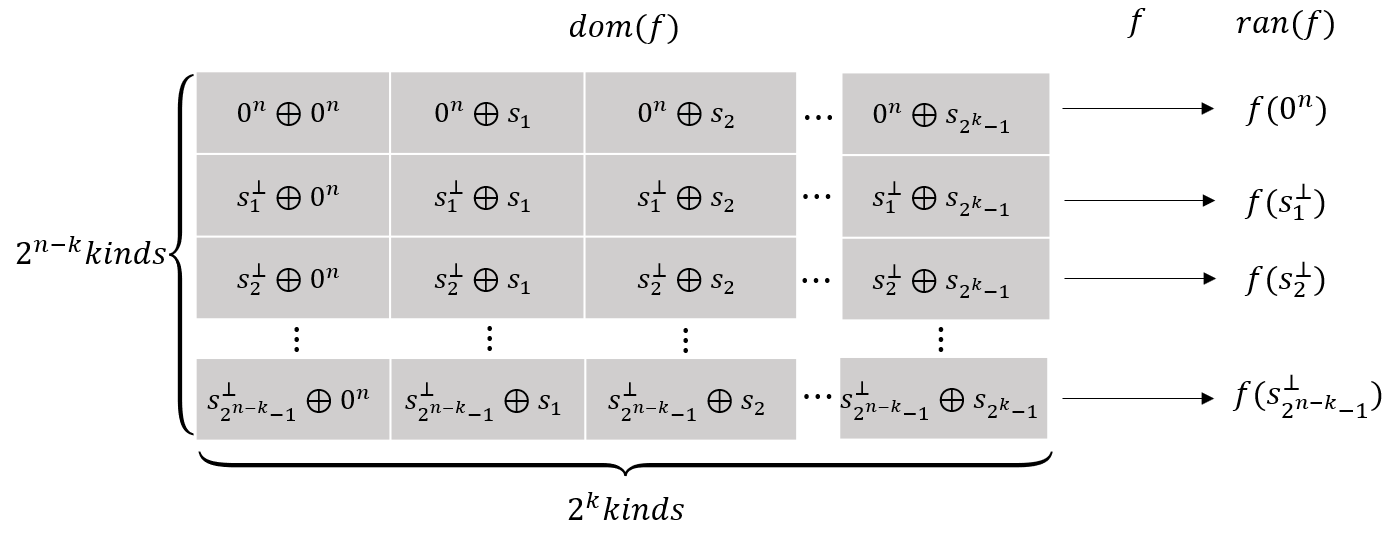}
			\caption{The illustration of $GSP(n,k)$.} \label{f1}
		\end{center}
	\end{figure}

	We provide a brief sketch of $f$ to illustrate the relation of mapping. There are precisely $2^{n-k}$ unique images for $f$, and for each element in $ran(f)$, its preimage is a set with $2^k$ elements. As Figure \ref{f1}
	shows, the left part, representing $dom(f)$, is a grid of $2^{n-k} \times 2^k$, whose elements of each rows will be mapped to a unique element in $ran(f)$,
	and $\mathcal{T}(f)$  is a subset of $dom(f)$ with $2^{n-k}$ elements selected uniquely from each row.

\subsection{Dimensional reduction}
	Dimensional reduction is  a key idea	used in whole algorithms in this paper, which uses the known results in $S$ or in $S^{\perp}$ to ensure the following result linearly independent with the previous.
	
	Brassard  and Hoyer \cite{Brassard1997} implied this idea  and came up with an exact quantum polynomial-time algorithm to solve Simon's problem. We can employ the idea to design an exact  quantum query algorithm for solving the generalized Simon's problem.

	Suppose there exists an algorithm to get a nonzero element $s\in S$ (or $z\in S^\perp$) randomly. Then we can use dimensional reduction to ensure the number of calling this algorithm can be $k$, $n-k$, the latter case etc.
	
	For $1\leq l\leq n$,
	denote $I^{(l)} \subseteq[n]=\{1, \cdots , n\}$ with $l=|I^{(l)}|$, and $I^{(l)} \subseteq I^{(l+1)}$ always hold for $l=0,1,\ldots,n-1$; 	
	denote $K^{(l)}=\lbrace x=x_{1}x_{2}\dots x_{n}: x \in \lbrace 0, 1 \rbrace^{n}, \forall j \in I^{(l)}, x_{j}=0\rbrace$ and
	$K_\perp^{(l+1)}=\lbrace x=x_{1}x_{2}\dots x_{n}: x \in K^{(l)}, \ \forall x_j \in I^{(l+1)}\backslash I^{(l)} , x_j=1\rbrace$.
	
	\begin{algorithm}
		\caption{Dimensional reduction}
		\label{alg:0}
		\begin{algorithmic}[1]
			\STATE Initial: $I^{(0)}\leftarrow\emptyset, Y=\emptyset$
			\FOR{$l\leftarrow 0:k-1$ (or $l\leftarrow 0:n-k-1$)}
			\STATE Get $s^{(l+1)} \in K^{(l)}\backslash\{0^n\}$ (or get $z^{(l+1)} \in K^{(l)}\backslash\{0^n\}$)
			\STATE Suppose $p^{(l+1)}-th$ bit of $s^{(l+1)}$ (or $z^{(l+1)}$) is nonzero, $I^{(l+1)} \leftarrow I^{(l)} \cup \{p^{(l+1)}\}$, $Y\leftarrow Y\cup \{s^{(l+1)}\}$ \\
			(or $Y\leftarrow Y\cup \{z^{(l+1)}\}$)
			\ENDFOR
			\STATE \textbf{return} $Y$
		\end{algorithmic}
	\end{algorithm}
	\begin{remark}
		$K^{(l)}$ can be divided into two parts as $K^{(l+1)}$ and ${K_\perp}^{(l+1)}$, since $\forall x \in K^{(l+1)}, x\oplus s^{(l)}\in {K_\perp}^{(l+1)}$.
	\end{remark}
	\begin{remark}
		$s^{(l+1)}, z^{(l+1)}\in K^{(l)}, s^{(l+2)}, z^{(l+2)}\notin K^{(l)}$. By induction, $\{s^{(1)}, \cdots, s^{(l)}\}$ and $\{z^{(1)}, \cdots, z^{(l)}\}$ will be two linearly independent sets of $({\{0, 1\}}^n, \oplus). $
	\end{remark}
	
	\begin{lemma}\label{lemma2}
		We have two properties as:
	
		1. $K^{(l)}\cap S=\{y=y_1y_2\cdots y_n: y\in S, \forall j \in I, y_j=0\}, |K^{(l)}\cap S|=2^{k-l}$.
	
		2. $K^{(l)}\cap S^{\perp}=\{y=y_1y_2\cdots y_n: y\in S^{\perp}, \forall j \in I, y_j=0\}, |K^{(l)}\cap S|=2^{n-k-l}$.
	\end{lemma}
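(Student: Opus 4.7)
The set-equality characterizations in both parts are immediate from the definitions of $K^{(l)}$ and intersection, so the substantive content is the two cardinality equalities. Note that the claim implicitly refers to the specific $I^{(l)}$ produced by Algorithm 1 (Dimensional reduction): for an arbitrary $I^{(l)}\subseteq[n]$ of size $l$, the constraints $y_j=0$ for $j\in I^{(l)}$ could be linearly dependent on $S$ and the cardinality $2^{k-l}$ need not hold.

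The plan is to proceed by induction on $l$. In the base case $l=0$, we have $I^{(0)}=\emptyset$, so $K^{(0)}=\{0,1\}^n$, giving $K^{(0)}\cap S=S$ of size $2^k$ and $K^{(0)}\cap S^\perp=S^\perp$ of size $2^{n-k}$, as required. For the inductive step, suppose the property holds at level $l$, and consider the coordinate projection
\[
\pi: K^{(l)}\cap S\to\{0,1\},\qquad \pi(y)=y_{p^{(l+1)}},
\]
where $p^{(l+1)}$ is the position added to $I^{(l)}$ to form $I^{(l+1)}$. Since $K^{(l)}\cap S$ is the intersection of two subgroups of $(\{0,1\}^n,\oplus)$, it is itself a subgroup, and $\pi$ is a group homomorphism into $(\{0,1\},+\!\!\mod 2)$.

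I would next identify the kernel of $\pi$ with $K^{(l+1)}\cap S$: an element $y\in K^{(l)}\cap S$ lies in $K^{(l+1)}\cap S$ iff $y_{p^{(l+1)}}=0$, because the remaining coordinates indexed by $I^{(l+1)}=I^{(l)}\cup\{p^{(l+1)}\}$ are already forced to $0$ by membership in $K^{(l)}$. By construction of Algorithm 1, the element $s^{(l+1)}\in K^{(l)}\cap S\setminus\{0^n\}$ satisfies $s^{(l+1)}_{p^{(l+1)}}=1$, so $\pi$ is surjective. The first isomorphism theorem then yields $|K^{(l+1)}\cap S|=\tfrac12|K^{(l)}\cap S|=2^{k-l-1}$, closing the induction. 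The argument for $S^\perp$ is identical, using $z^{(l+1)}\in K^{(l)}\cap S^\perp$ in place of $s^{(l+1)}$.

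The only delicate point is the choice of position $p^{(l+1)}$, and Algorithm 1 is engineered precisely to ensure that at every step a fresh linearly independent direction is killed, making the coordinate projection surjective. I do not expect a genuine obstacle; once the homomorphism $\pi$ is set up and its kernel recognized, the two statements fall out together.
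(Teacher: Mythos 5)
Your proof is correct, and it is worth noting that the paper states this lemma with no proof at all; the closest it comes is the preceding Remark~1, which observes informally that $K^{(l)}$ splits into the two halves $K^{(l+1)}$ and ${K_\perp}^{(l+1)}$ exchanged by translation by $s^{(l+1)}$. Your induction via the coordinate projection $\pi(y)=y_{p^{(l+1)}}$, with kernel $K^{(l+1)}\cap S$ and surjectivity witnessed by $s^{(l+1)}$ itself, is precisely the rigorous version of that halving argument, so you have supplied the missing proof rather than taken a different route. Your opening caveat is also well taken and should be kept: the statement is false for an arbitrary index set of size $l$, and in fact the two numbered claims do not even hold simultaneously for the $I^{(l)}$ built by Algorithm~1 --- each is valid only in the corresponding branch (reduction by elements of $S$ for claim~1, by elements of $S^{\perp}$ for claim~2). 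For instance, with $n=3$, $k=1$, $S=\{000,110\}$ and the $S^{\perp}$-branch choosing $z^{(1)}=001$, $p^{(1)}=3$, one gets $|K^{(1)}\cap S^{\perp}|=2=2^{n-k-1}$ as claimed, but $|K^{(1)}\cap S|=2\neq 2^{k-1}$. Your closing phrase ``the two statements fall out together'' should therefore be softened to ``each statement follows from the corresponding branch of the algorithm.'' One small point you leave implicit but ought to state: the induction hypothesis $|K^{(l)}\cap S|=2^{k-l}\geq 2$ for $l\leq k-1$ is exactly what guarantees that step~3 of Algorithm~1 can produce a nonzero $s^{(l+1)}\in K^{(l)}\cap S$ at all, so the induction simultaneously establishes that the algorithm is well defined.
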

	
	Now, we can draw a conclusion that the dimension of $K^{(l)}\cap S$ or $K^{(l)}\cap S^{\perp}$ will be reduced after we get a new $s\in S$ or $z\in S^{\perp}$, and then we can use this trick to keep the output set to be linearly independent for designing an exact quantum or classical algorithm, or for
	analyzing the lower bound of classical randomized algorithm.

\subsection{Quantum amplitude amplification }\label{section2.4}

	Let us recall quantum amplitude amplification\cite{Brassard2012}.

	\begin{definition}
		Let $\mathcal{A}$ be any quantum algorithm that uses no measurements, and let $\chi:\mathbb{Z}\to \{0, 1\} $ be any Boolean function.
		Assume that $\mathcal{A}\ket{0}=\ket{\Psi}=\ket{\Psi_0}+\ket{\Psi_1}$, and we call $\ket{\Psi_1}=\frac{1}{\sqrt a}\sum_{x\in A} \ket{x}$ as the good state, and $\ket{\Psi _0}=\frac{1}{\sqrt {1-a}} \sum_{x\in B} \ket{x}$ as the bad state,
		where $A\subseteq\{x\in{\{0, 1\}}^n:\chi(x)=1\}$, $B\subseteq\{x\in{\{0, 1\}}^n:\chi(x)=0\} $.

	\end{definition}
	\begin{lemma}[\cite{Brassard2012}]\label{lemma3}
		There exists a quantum algorithm that given the initial success probability $a>0$ of $\mathcal{A}$ , finds a good solution with certainty
		using a number of applications of $\mathcal{A}$ and $\mathcal{A}^{-1}$ which is in $\Theta(\frac{1}{\sqrt{a}})$ in the worst case.
	\end{lemma}

	The complementary description of Lemma \ref{lemma3} is given as follows, where $\phi$ and $\varphi$ are parameters dependent of $a$:\\
	$$ S_\chi(\varphi)\ket{x}=\left\{
		\begin{aligned}
			e^{i\varphi}\ket{x} \quad \text{if} \quad \chi(x)=1, \\
			\ket{x} \quad \text{if} \quad \chi(x)=0,             \\
		\end{aligned}
		\right.
	$$
	$$ S_0(\phi)\ket{x}=\left\{
		\begin{aligned}
			\ket{x}  \quad \text{if} \quad x=0,           \\
			e^{i\phi}\ket{x} \quad \text{if} \quad x\ne0, \\
		\end{aligned}
		\right.
	$$

	\centerline{$Q= Q(\mathcal{A}, \chi, \phi, \varphi)=-\mathcal{A}S_0(\phi)\mathcal{A}^{-1}S_\chi(\varphi). $}
	\begin{lemma}[\cite{Brassard2012}]\label{lemma4}
		Let $Q= Q(\mathcal{A}, \chi, \phi, \varphi) $. Then \\
		\centerline{$ Q\ket{\Psi_1} = e^{i\varphi}((1-e^{i\phi})a-1))\ket{\Psi_1}+e^{i\varphi}(1-e^{i\phi})a\ket{\Psi_0} $, }
		\centerline{$ Q\ket{\Psi_0} = (1-e^{i\phi})(1-a)\ket{\Psi_1}-((1-e^{i\phi})a+e^{i\phi})\ket{\Psi_0} $, } \\
		where $a=\braket{\Psi_1|\Psi_1}$.
	\end{lemma}

	\begin{corollary}\label{c1}
		There exists a quantum algorithm that given the initial success probability $\frac{1}{4} \leq a < 1 $ of $\mathcal{A}$, finds a good solution with certainty
		using applications of $\mathcal{A}$ and $\mathcal{A}^{-1}$ exactly both once. Let $\theta=\pm\arccos(1-\frac{1}{2a})$ and the specific expression of the two parameters used in
		this algorithm is given as follows: \\
		$$  \left\{
			\begin{aligned}
				& \phi=\theta+2k_1\pi, k_1 \in\mathbb{Z} ,   \\
				& \varphi=\theta+2k_2\pi, k_2 \in\mathbb{Z}.
			\end{aligned}
			\right.
		$$
	\end{corollary}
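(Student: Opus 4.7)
The plan is to apply Lemma 6 to the starting state $\ket{\Psi}=\mathcal{A}\ket{0}=\ket{\Psi_0}+\ket{\Psi_1}$ and choose $\phi,\varphi$ so that one application of $Q$ rotates the entire amplitude onto the good subspace. Since $Q=-\mathcal{A}S_{0}(\phi)\mathcal{A}^{-1}S_{\chi}(\varphi)$ uses $\mathcal{A}$ and $\mathcal{A}^{-1}$ exactly once, success with certainty after a single $Q$ step is precisely the statement of the corollary. The two-dimensional invariant subspace spanned by $\ket{\Psi_0}$ and $\ket{\Psi_1}$ makes this a purely algebraic task: enforce one linear condition to kill the bad component, and let unitarity take care of the good-component norm automatically.

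Concretely, I would first write
\begin{equation*}
Q\ket{\Psi}=\bigl[e^{i\varphi}\bigl((1-e^{i\phi})a-1\bigr)+(1-e^{i\phi})(1-a)\bigr]\ket{\Psi_1}+\bigl[e^{i\varphi}(1-e^{i\phi})a-(1-e^{i\phi})a-e^{i\phi}\bigr]\ket{\Psi_0},
\end{equation*}
by adding the two identities of Lemma \ref{lemma6}. For the output to be a good state with certainty, the coefficient of $\ket{\Psi_0}$ must vanish, giving the single equation
\begin{equation*}
(e^{i\varphi}-1)(1-e^{i\phi})a = e^{i\phi}.
\end{equation*}

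Next I would impose the natural symmetric ansatz $\phi=\varphi=\theta$ (consistent with the $2\pi$-ambiguity claimed in the statement) and simplify. Using $(1-e^{i\theta})(e^{i\theta}-1)=-(1-e^{i\theta})^2=-e^{i\theta}(2\cos\theta-2)$, the condition collapses to
\begin{equation*}
-2a(\cos\theta-1)=1,\qquad\text{i.e.,}\qquad \cos\theta=1-\tfrac{1}{2a},
\end{equation*}
which is exactly $\theta=\pm\arccos\!\bigl(1-\tfrac{1}{2a}\bigr)$. This real solution exists precisely when $-1\le 1-\tfrac{1}{2a}\le 1$, i.e.\ $a\ge \tfrac{1}{4}$, explaining the hypothesis. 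Any $\phi=\theta+2k_1\pi$ and $\varphi=\theta+2k_2\pi$ yield the same unitary, proving the claimed parameter family.

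The only subtlety I foresee is bookkeeping: one has to be careful that $\ket{\Psi_0}$ and $\ket{\Psi_1}$ in Lemma \ref{lemma6} are the \emph{unnormalized} projections (with squared norms $1-a$ and $a$), so the algebraic ``success'' condition is solely that the coefficient of $\ket{\Psi_0}$ is zero; automatically then unitarity of $Q$ forces the remaining coefficient of $\ket{\Psi_1}$ to have modulus $1/\sqrt{a}$, producing a unit good state. I would verify this as a sanity check but not as a separate argument. Apart from this normalization bookkeeping, the entire proof is a single one-line solve for $\theta$ once the ansatz $\phi=\varphi$ is made, and so I expect no serious obstacle beyond the choice of that ansatz.
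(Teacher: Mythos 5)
Your proposal is correct and rests on the same foundation as the paper's proof: both start from Lemma \ref{lemma6}, form $Q\ket{\Psi}=Q(\ket{\Psi_0}+\ket{\Psi_1})$, and demand that the coefficient of $\ket{\Psi_0}$ vanish --- your condition $(e^{i\varphi}-1)(1-e^{i\phi})a=e^{i\phi}$ is exactly the paper's Eq.~(\ref{e1}) rearranged, and your normalization remark (that $\ket{\Psi_0},\ket{\Psi_1}$ are unnormalized projections, so unitarity forces the surviving coefficient to have modulus $1/\sqrt{a}$) is the right sanity check. Where you differ is in how the equation is solved. The paper treats $\phi$ and $\varphi$ as independent real unknowns: it solves for $e^{i\phi}$ as a function of $\varphi$, imposes $|e^{i\phi}|=1$ so that $\phi$ is real, and extracts $\cos\varphi=1-\tfrac{1}{2a}$ together with $a\ge\tfrac14$ from a complex-logarithm/argument computation, then recovers $\phi=\arg Z=\theta+2k_2\pi$. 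You instead posit the symmetric ansatz $\phi=\varphi=\theta$ up front and verify in one line that it forces $\cos\theta=1-\tfrac{1}{2a}$. Your route is shorter and fully sufficient for the corollary as stated (an existence claim), while the paper's longer computation additionally shows that, among real parameters, the condition $a\ge\tfrac14$ and the relation $\phi\equiv\varphi\pmod{2\pi}$ are essentially forced rather than merely convenient; the cost of your shortcut is only that the ansatz has to be guessed rather than derived.
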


	\begin{proof} 	
		
	Since $Q$ is used once to get the good solution exactly, we have   $ Q( \ket{\Psi_1}+\ket{\Psi_0} )=(\frac{1} {\vert\vert \Psi_1 \vert\vert})   \ket{\Psi_1} $.	Therefore, 	
			by  Lemma \ref{lemma4}, the chosen $\phi$, $\varphi \in \mathbb{R}$ satisfy Eq. (\ref{e1}):
				\begin{align}
						& e^{i\varphi}(1-e^{i\phi})a=((1-e^{i\phi})a+e^{i\phi}), \text{ where} \ 0<a<1, \label{e1}        \\
			\Rightarrow & e^{i \phi}=\frac{a(e^{i\varphi}-1)}{a(e^{i\varphi}-1)+1}         \notag             , \\
			\Rightarrow & \phi=-i \log \frac{a(e^{i\varphi}-1)}{a(e^{i\varphi}-1)+1}. \notag
		\end{align}
		The definition of Logarithmic Function for complex number is shown in Eq. (\ref{e2}):
		\begin{align}
			& e^{W}=Z\Rightarrow W=\log Z=\log \lvert Z \rvert +i(\arg Z+2k\pi), k \in \mathbb{Z}. \label{e2}
		\end{align}
		Let $Z=i\phi=\frac{a(e^{i\varphi}-1)}{a(e^{i\varphi}-1)+1}$. If $\phi \in \mathbb{R}$, then $Z$ is a pure imaginary number, and $\lvert Z \rvert = 1$. Therefore, we obtain the following equations:
		\begin{align}
			\lvert Z \rvert & =-\frac{2a^{2}(\cos\varphi-1)}{2a\cos\varphi-2a+2a^{2}-2a^{2}\cos\varphi+1}=1 \label{e3}    \\
							& \Rightarrow -2a^{2}\cos\varphi+2a^{2}=2a\cos\varphi-2a+2a^{2}-2a^{2}\cos\varphi+1  \notag   \\
							& \Rightarrow 2a\cos\varphi-2a+1=0                                                    \notag  \\
							& \Rightarrow cos\varphi=1-\frac{1}{2a}                                               \notag  \\
							& \Rightarrow \varphi=\pm\arccos(1-\frac{1}{2a})+2k_1\pi, k_1\in \mathbb{Z}.            \notag
		\end{align}
	By the denominator of $Z$ being nonzero, we get the first constriction from Eq. (\ref{e3}):
		\begin{align*}
						& \cos\varphi(2a-2a^{2})-2a+2a^{2}+1\ne 0                           \\
			\Rightarrow & \cos\varphi\ne\frac{2a-2a^{2}-1}{2a-2a^{2}}=1-\frac{1}{2a-2a^{2}} \\
			\Rightarrow & 1-\frac{1}{2a}\ne 1-\frac{1}{2a-2a^{2}}                           \\
			\Rightarrow & a^{2}\ne 0 .
		\end{align*}
		By the domain of $\arccos$ defined in $[-1, 1]$, we get another constriction:
		\begin{align*}
			& -1\le1-\frac{1}{2a}\le 1      \\
			& \Rightarrow  a\ge\frac{1}{4}.
		\end{align*}
		So, we have $\varphi=\pm\arccos(1-\frac{1}{2a})+2k_1\pi, k_1\in \mathbb{Z}$, with the condition $\frac{1}{4}\le a < 1$.
		Let $\theta=\pm\arccos(1-\frac{1}{2a})$. Substitute $\varphi=\theta+2k_1\pi$ into $Z$, and then
		$Z=(1-\frac{1}{2a})+sign(\theta)\frac{\sqrt{4a-1}}{2a}i$.
		Therefore $\phi=\arg Z=\theta+2k_2\pi, k_2\in \mathbb{Z}$.
	\end{proof}

\section{Exact quantum query complexity for the generalized Simon's problem}

	In this section, we  prove that the lower bound on the exact quantum query complexity is $\Omega(n-k)$. On the other hand, an exact quantum query algorithm with $O(n-k)$ queries for solving the generalized Simon's problem is presented.

	\subsection{The lower bound}

	Koiran $et\ al$ \cite{Koiran2007} gave a lower bound on the quantum query complexity of  Simon's problem. Moreover, they transformed Simon's problem to another problem to distinguish between a trivial subgroup and a hiding subgroup, i.e., to determine whether or not the given $f$ is a bijection.
    Although the discrimination does not give the result as $s\ne 0^n$, the complexity of this transformed problem is a lower bound  on the quantum query complexity of  Simon's problem, as well.

    We utilize similar method to give a lower bound
    on the quantum query complexity for the generalized Simon's problem, but we need to change the second property of $Q_n$ in the following Proposition \ref{definition6}.  
    
    In this section, we denote by $G$ an Abelian group $({\{0, 1\}}^n, \oplus)$, and denote by $E$ the set ${\{0, 1\}}^m$.

	\begin{definition}
		Let $h:G \to E$ be a partial function, and let $f:G \to E$ be a total function. $|dom(h)|$ denotes the size of the domain of $h$, and we define:
		$$ I_h(f)=\left\{
			\begin{aligned}
				& 1 \quad \text{if $f$ extends $h$}, \\
				& 0 \quad \text{otherwise}.          \\
			\end{aligned}
			\right.
		$$
		More precisely,
		\begin{align*}
			I_h(f)=\prod \limits_{i\in domain(h), j=h(i)}{\Delta_{i, j}(f)},
		\end{align*}
	where ${\Delta_{i, j}(f)}$ is 1 if $f(i)=j$ and 0 otherwise. Then $I_h(f)$ is a monomial in the variables $(\Delta_{i, j}(f))$.

	\end{definition}

	\begin{definition}
		Let $f:G \to E$. We call $f$  hiding a subgroup $G_d$ of $G$ with order $d$, if $\forall x \in G $, $\forall y \in G_d$, $f(x)=f(x\oplus y)$.
	\end{definition}
	\begin{remark}
		For the generalized Simon's problem defined in Section \ref{section2.2}, we  have the given $f$ hiding a subgroup $S$ of $G$ with order $2^k$.
	\end{remark}

	\begin{lemma}[\cite{Beals1998}\cite{Koiran2007}] \label{lemma5}
		If $\cal A$ is an quantum algorithm of query complexity $T$, then there is a set $K$ of partial functions from $G \to E$ such that, for any function $f:G \to E$, the algorithm $\cal A$ accepts $f$ with probability
		\begin{align*}
			P_n(f)=\sum \limits_{g\in K}\alpha_{n, g}I_g(f),
		\end{align*}
		where, for every $g\in K$, we have $|dom(g)|\leq 2T(n)$ and $\alpha_{n, g}$ is real number.

	\end{lemma}

\iffalse
	\begin{definition}
		An  algorithm $\cal A$ is said to distinguish the generalized Simon's problem with bounded error $\epsilon$, if it accepts any function hiding a subgroup of order $2^k$ with a probability at least $1-\epsilon$ and rejects every other function with a probability at least $1-\epsilon$, and the query complexity is the function $T$.
	\end{definition}

	\begin{remark}
		We have to point out that algorithm $\cal A$ only distinguishes between the subgroup of order $2^k$ and the other subgroup, so $T$ is the lower bound of the generalized Simon's problem.
	\end{remark}
\fi

By means of \cite{Koiran2007}, we have the following proposition.

	\begin{proposition} \label{definition6}
		Suppose $\cal A$ is an algorithm computing the generalized Simon's problem with error bounded by $\epsilon\le \frac{1}{2}$. For $0\leq d \leq n $, $D=2^d$, let $Q_n(D)$ be the probability that $\cal A$ accepts $f$ when $f$ is chosen uniformly at random among the functions from $G$ to $E$ hiding a subgroup of $G$ with order $D$. If we denote by $X_D$
		the set of functions hiding a subgroup of order $D$, then we have:
		\begin{align*}
			Q_{n}(D)=\frac{1}{|X_D|}\sum_{f\in X_D}P_n(f).
		\end{align*}
		\noindent
		In addition, it has the following two properties:\\
		\noindent
		(i). for any integer $d \in[0, n]$, $0\le Q_{n}(2^d)\le1$;\\
		\noindent
		(ii). $Q_n(1)\le \epsilon$ and $Q_n(2^k)\ge 1-\epsilon$, hence $|Q'_n(x_0)|\ge \frac{1-2\epsilon}{2^k-1}>0$, for some $x_0 \in[1, 2^k]$.

	\end{proposition}
	From the above proposition, $Q_n(1)$ is the probability that $\cal A$ accepts $f$, with $f$ hiding a subgroup of $G$ of order 1, and the subgroup has only one element $0^n$. As for $Q_n(2^k)$, it represents the probability that $\cal A$ accepts $f$, with $f$ hiding a subgroup of $G$ of order $2^k$.

	We recall a useful lemma by Koiran (\cite{Koiran2007}, Lemma 5).

	\begin{lemma} [\cite{Koiran2007}] \label{Koiran2007lemma6}
	Let $c>0$ and $\xi>1$ be constants and let $P$ be a real polynomial with following properties:\\
	(i). $|P(\xi^i)|\le 1$, for any integer $0\le i \le n$;\\
	(ii). $|P'(x_0)|\ge c$, for some real number $1\le x_0 \le \xi$.
	Then $deg(P)=\Omega(n)$ , more precisely,
	\begin{align*}
	\deg(P)\ge \min\left(\frac{n}{2}, \frac{\log_2{\left(\xi^{n+3}c\right)}-1}{\log_2{\left(\frac{\xi^3}{\xi-1}\right)}+1}\right).
	\end{align*}
	\end{lemma}

	Now, we give a similar lemma as Lemma \ref{Koiran2007lemma6} above,  but change some conditions and provide a simplified proof in this section. 

	\begin{lemma} \label{lemma7}
		Let $c>0$ be a constant and let P be a real polynomial with following properties:\\
		(i). $|P(2^i)|\le1$, for any integer $0\le i \le n$;\\
		(ii). $|P'(x_0)|\ge c$, for some real number $1\le x_0 \le 2^k$.\\
	Then
		\begin{align*}
			\deg(P)\ge \min\left(\frac{n-k}{2}, \frac{n+2+\log_2{c}}{4}\right).
		\end{align*}
	\end{lemma}

	\begin{proof}  In the interest of readability, we would give  the detailed proof  here. 
		Let $d$ denote the degree of $P$. If $d\geq\frac{n-k}{2}$, the proof is complete. If $d=0$,  the second condition is not satisfied, and if $d=1$, $p'$ is a nonzero constant, so $p$ is a monotone and does not satisfy the first condition.   So, we assume $2\le d\le \frac{n-k-1}{2}$.
		
		The polynomial $P'$ and $P''$ are of degrees $d-1$ and $d-2$, respectively, so there exists an integer $a \in [n-2d+2,n-1]$ such that $P'$ has no real root in $(2^a ,2^{a+1})$, and $P''$ has no root whose real part is in this same interval. 
		It holds since there are $2d-2$ integers in this interval but these two polynomials have at most $2d-3$ real roots or real parts of root. Then, we have two properties as follows:
		
		(i) $P'$ and $P''$ are always greater than zero or always less than zero in this interval;
		
		(ii) $P$ and $P'$ are monotone in this interval.
		
		By the condition $|P(2^i)|\le 1$,  the range of $P$ in this interval $(2^a,2^{(a+1)})$ is a subset of $[-1,1]$. 
		Then we finished the first part of this proof:
		\begin{align}
			\left|P'(\frac{(2^a+2^{a+1})}{2})\right|
			&=\left||P'(\frac{3}{2}2^a)\right|         \label{eqA.1}         \\
			&\leq \max\left( \left|\frac{P(2^a)-P(\frac{3}{2}2^a)}{2^a-\frac{3}{2}2^a}\right|,\left|\frac{P(2^{a+1})-P(\frac{3}{2}2^a)}{2^{a+1}-\frac{3}{2}2^a} \right| \right)   \notag\\
			&\leq \frac{1}{2^{(a-1)}}\max\left( \left|P(2^a)-P(\frac{3}{2}2^a)\right|,\left|P(2^{a+1})-P(\frac{3}{2}2^a) \right| \right)                                   \notag     \\
			&\leq \frac{1}{2^{(a-2)}}   .                                                                                                                                    \notag
		\end{align}

		By  Equation(\ref{eqA.1}), we therefore have:
		\begin{align}
			\left|\frac{P'(\frac{3}{2}2^a)}{P'(x_0)}\right|\leq \frac{1}{c2^{a-2}}\leq \frac{1}{c2^{n-2d}}. \label{eq.A.5}
		\end{align}

		Let us write $P'(X)=\lambda \prod \limits_{i=1}^{d-1}(X-\alpha_i)$, where the $\alpha_i$s are real or complex numbers. We have the following equality:

		\begin{align}
			\left|\frac{P'(\frac{3}{2}2^a)}{P'(x_0)}\right|= \prod \limits_{i=1}^{d-1}\left| \frac{\frac{3}{2}2^a-\alpha_i}{x_0-\alpha_i} \right|. \label{eqA.2}
		\end{align}

		Let $f(x)=|\frac{\frac{3}{2}2^a-x}{x_0-x}|$. By $a\geq (n-2d+2), d\leq \frac{n-k-1}{2},1\leq x_0 \leq 2^k$, then $k-a\le k-(n-2d+2)\le -3\Longrightarrow k<a\Longrightarrow x_0<2^a$. If $x\in \mathbb{R}\backslash({x_0}\cup (2^a,2^{(a+1)})$, then $f(x)\geq \min(1, f(2^a), f(2^{a+1}))\geq \frac{1}{4}$.
		Notice that no root $\alpha_i$ of $P'$ has its real part in $(2^a,2^{(a+1)})$. Suppose $\alpha_i=\mathfrak{R}(\alpha_i)+i\mathfrak{I}(\alpha_i)$. We therefore have
		\begin{align}
			f(\mathfrak{R}(\alpha_i))\geq \frac{1}{4},   \label{eqA.3}
		\end{align}
		\begin{align*}
			\left| \frac{\frac{3}{2}2^a-\alpha_i}{x_0-\alpha_i} \right|\geq \sqrt{\frac{\left(\frac{3}{2}2^a-\mathfrak{R}(\alpha_i)\right)^2+\mathfrak{I}^2(\alpha_i)}{\left (x_0-\mathfrak{R}(\alpha_i)\right)^2+\mathfrak{I}^2(\alpha_i)}}\geq \min \left(1, \left| \frac{\frac{3}{2}2^a-\mathfrak{R}(\alpha_i)}{x_0-\mathfrak{R}(\alpha_i)} \right| \right).
		\end{align*}
		
		and thus $f(\alpha_i)\geq \frac{1}{4}$ by Equation(\ref{eqA.3}). We conclude from Equation(\ref{eqA.2}) that
		\begin{align*}
			\left|\frac{P'(\frac{3}{2}2^a)}{P'(x_0)}\right| \geq \frac{1}{4^{d-1}}.
		\end{align*}

		Taking Equation(\ref{eq.A.5}) into account, we finally obtain the following results:
		\begin{align*}
			\frac{1}{4^{d-1}}\leq \frac{1}{c2^{n-2d}}\Longrightarrow
		    d\geq \frac{n+2+\log_{2}{c}}{4}.
		\end{align*}
	\end{proof}
	
	\begin{theorem}\label{theorem3}
		If $\mathcal{A}$ is an algorithm that solves the generalized Simon's problem with bounded error  $\epsilon$ and query complexity T,
		then $T(n)=\Omega(n-k)$; more precisely,
		\begin{align*}
			T(n)\ge \min\left(\frac{n-k}{4}, \frac{n-k+3+\log_2({1-2\epsilon})}{8}\right).
		\end{align*}
	\end{theorem}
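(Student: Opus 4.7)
My plan is to follow the polynomial method route of Koiran et al., combining the already established Lemma \ref{lemma_new6} with Lemma \ref{lemma8}. The whole proof reduces to exhibiting a single univariate real polynomial $Q_n$ of degree at most $2T(n)$ that, when evaluated at the integer powers $2^d$, coincides with the averaged acceptance probability defined in Definition \ref{definition6}. Once this polynomial is in hand, the two hypotheses of Lemma \ref{lemma8} are essentially free and the claimed lower bound drops out.

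For the polynomiality step I would start from the monomial decomposition $P_n(f)=\sum_{h\in K}\alpha_{n,h}I_h(f)$ supplied by Lemma \ref{lemma_new6}, each summand having degree $|\mathrm{domain}(h)|\le 2T(n)$ in the variables $\Delta_{i,j}(f)$. Averaging over $f$ drawn uniformly from $X_D$ yields $Q_n(D)=\sum_h \alpha_{n,h}\Pr_{f\in X_D}[f\text{ extends }h]$. The key claim is that for each fixed partial $h$ with $|\mathrm{domain}(h)|=t$ the probability $\Pr_{f\in X_D}[f\text{ extends }h]$ is a polynomial in $D$ of degree at most $t$. To see this, I would enumerate the $\mathbb{F}_2$-subspaces $H\le G$ of order $D$, count the functions $f\in X_D$ hiding $H$ that extend $h$ (which requires $h$ to be constant on every intersection $\mathrm{domain}(h)\cap(x\oplus H)$ and to inject the resulting coset values into $E$), sum and divide by $|X_D|$. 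A Gaussian-binomial style enumeration shows the $D$-dependence consolidates into a polynomial of the claimed degree. This is the main obstacle of the proof: the averaging must be carried out carefully over the $\mathbb{F}_2$-subspace lattice, and one must verify that spurious $D$-dependences cancel so that the result is a true polynomial rather than merely a rational function in $D$.

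With $Q_n$ in hand as a real polynomial of degree $\le 2T(n)$, Lemma \ref{lemma8} applies immediately. Property 1 is satisfied because $Q_n(2^d)\in[0,1]$ is a probability for every $0\le d\le n$. For property 2, the bounded-error assumption gives $Q_n(1)\le\epsilon$ and $Q_n(2^k)\ge 1-\epsilon$, so the mean value theorem produces $x_0\in[1,2^k]$ with
\[
|Q_n'(x_0)|\ \ge\ \frac{Q_n(2^k)-Q_n(1)}{2^k-1}\ \ge\ \frac{1-2\epsilon}{2^k-1}\ =:\ c.
\]
Lemma \ref{lemma8} then forces $\deg(Q_n)\ge \min\bigl(\tfrac{n-k}{2},\tfrac{n+2+\log_2 c}{4}\bigr)$. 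Substituting $\log_2 c\ge \log_2(1-2\epsilon)-k$ and using $T(n)\ge \deg(Q_n)/2$ yields the announced bound, up to the small additive constant inside the logarithm. The remainder of the work is bookkeeping; the substantive content lives entirely in the polynomiality argument above.
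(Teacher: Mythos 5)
Your proposal is correct and follows essentially the same route as the paper: both rest on the two properties of $Q_n(D)$ from Definition~\ref{definition6}, the fact that $Q_n$ is a real polynomial in $D$ of degree at most $2T(n)$, and an application of Lemma~\ref{lemma8} with $c$ of order $(1-2\epsilon)/(2^k-1)$. Two remarks. First, the step you single out as the main obstacle --- that averaging $\Pr_{f\in X_D}[f\text{ extends }h]$ over the subgroup lattice really yields a \emph{polynomial} in $D$ of degree at most $|domain(h)|$ --- is not proved in the paper either; it is cited wholesale as Proposition~1 of \cite{Koiran2007}. Your sketch points in the right direction (the degree bound ultimately comes from identities like $\Pr[v_1,\dots,v_j\in H]=\prod_{i=0}^{j-1}\frac{D-2^i}{2^n-2^i}$ for independent $v_i$ and a uniformly random subgroup $H$ of order $D$), but as written it is a plan rather than a proof: you would still need to verify that the normalization by $|X_D|$, which itself depends on $D$, cancels so that no genuine rational dependence survives. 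Since the paper treats this as a black box, I would not count it against you, but be aware you have promised more than you delivered there. Second, a small quantitative point: the paper applies Lemma~\ref{lemma8} to $P=2Q_n-1$ rather than to $Q_n$ itself, which doubles the derivative bound to $c=\frac{2-4\epsilon}{2^k-1}$ and is exactly what produces the constant $3$ in $\frac{n-k+3+\log_2(1-2\epsilon)}{8}$; applying the lemma to $Q_n$ directly, as you do, gives $\frac{n-k+2+\log_2(1-2\epsilon)}{8}$, i.e.\ you lose the additive constant you flagged. The asymptotic conclusion $T(n)=\Omega(n-k)$ is unaffected.
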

	\begin{proof}
		By the two properties of $Q_n(D)$, an application of Lemma \ref{lemma7} to polynomial $P=2Q_n(D) -1$ yields the inequality
		\begin{align*}
			\deg(P)
			& \ge \min\left(\frac{n-k}{2}, \frac{n+2+\log_2{\frac{2-4\epsilon}{2^k-1}}}{4}\right) \\
			& \ge \min\left(\frac{n-k}{2}, \frac{n-k+3+\log_2(1-2\epsilon)}{4}\right).
		\end{align*}
		Since $\deg(Q_n)\le 2T(n)$ (see, for example, \cite{Koiran2007}, Proposition 1) and $\deg(P)=\deg(Q(D))$, the proof is completed.
	\end{proof}

	Let the bounded error $\epsilon=0$ in  Theorem \ref{theorem3}. Then we can get a lower bound for quantum query complexity for the generalized Simon's problem.

	\begin{corollary}
		Any exact quantum algorithm that solves the generalized Simon's problem requires $\Omega(n-k)$ queries.
	\end{corollary}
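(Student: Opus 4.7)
The plan is essentially to invoke Theorem \ref{theorem3} with $\epsilon = 0$. First I would observe that an exact quantum algorithm for the generalized Simon's problem is, by definition, a bounded-error quantum algorithm with error $\epsilon = 0$: its output equals $S$ with probability exactly $1$ on every valid input. Hence the hypotheses of Theorem \ref{theorem3} are satisfied with $\epsilon = 0$, which lies inside the admissible range $\epsilon \le \frac{1}{2}$ required in Definition \ref{definition6}.

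Next I would substitute $\epsilon = 0$ into the quantitative bound of Theorem \ref{theorem3}. Because $\log_2(1 - 2\cdot 0) = \log_2 1 = 0$, the inequality collapses to
\begin{equation*}
T(n) \;\ge\; \min\!\left(\frac{n-k}{4},\; \frac{n-k+3}{8}\right).
\end{equation*}
Both expressions inside the minimum are affine functions of $n-k$ with strictly positive slope, so the right-hand side is $\Omega(n-k)$, which is exactly the conclusion of the corollary.

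The only point I would want to double-check is that the derivation feeding into Theorem \ref{theorem3}, and in particular the application of Lemma \ref{lemma8}, remains valid at the boundary $\epsilon = 0$. The relevant constant in that application is $c = (1-2\epsilon)/(2^k-1)$, which specializes to $1/(2^k-1)$ and is still a strictly positive constant for $k \ge 1$; consequently the hypothesis $|P'(x_0)| \ge c > 0$ of Lemma \ref{lemma8} continues to hold and no degeneracy arises. I therefore do not foresee any genuine obstacle: the whole polynomial-method lower bound is continuous in $\epsilon$, and the exact setting is simply the boundary case $\epsilon = 0$ of the bounded-error result already established.
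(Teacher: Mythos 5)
Your proposal is correct and follows exactly the paper's own route: the paper obtains the corollary by setting $\epsilon=0$ in Theorem \ref{theorem3}, just as you do. Your additional check that the constant $c=(1-2\epsilon)/(2^k-1)$ stays strictly positive at $\epsilon=0$ is a reasonable sanity check but adds nothing beyond what the paper already implicitly relies on.
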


	\subsection{The upper bound}

	Let $l\in N,0\leq l \leq n-k-1$, and let $I^{(l)}$ be an index set, which is constructed recursively by Algorithm \ref{alg:1} with an initial condition $I^{(0)}=\emptyset$.
	We use $I^{(l)}$ to construct the set $K^{(l)}$ and the quantum circuit $W^{(l)}$ as follows:

	$K^{(l)}=\lbrace x=x_{1}x_{2}\dots x_{n}, x \in \lbrace 0, 1 \rbrace^{n}: \forall j \in I^{(l)}, x_{j}=0\rbrace
	$, $W^{(l)}=\otimes_{i=1}^{n} H^{f(i)}$,
	$$  f(x)=\left\{
		\begin{aligned}
			0, i \in I^{(l)}  , \\
			1, i\notin I^{(l)}.
		\end{aligned}
		\right.
	$$

	$Q^{(l)}$ is the quantum circuit using quantum amplitude amplification to remove zero state with known amplitude (see Section \ref{section2.4}) which determines its construction.
	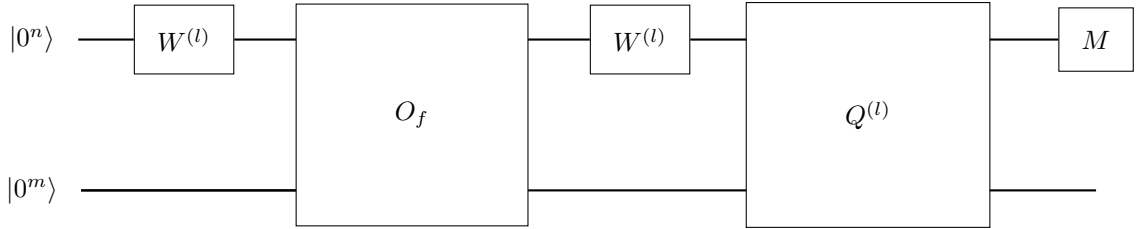
\begin{figure}[h!]\label{f2}
		\begin{center}
			\begin{tikzpicture}
				\node [rectangle, fill=white, inner sep=0. 3cm] (A) at (0, 0) {$\ket{0^{n}}$};
				\node [rectangle, fill=white, inner sep=0. 3cm] (B) at (0, -2) {$\ket{0^{m}}$};
				\node [rectangle, draw, thin, fill=white, inner sep=0. 3cm] (C) at (2, 0) {$W^{(l)}$};
				\draw[thick] (A) to (C);
				\draw[thick] (B) to (3. 5, -2);
				\node [rectangle, draw, thin, fill=white, inner sep=0. 3cm] (E) at (8, 0) {$W^{(l)}$};
				\node [rectangle, draw, thin, fill=white, inner sep=0. 3cm] (F) at (14, 0) {$M$};
				\draw[thick] (C) to (E);
				\draw[thick] (E) to (F);
				\draw[thick] (B) to (14, -2);
				\node [rectangle, draw, thin, fill=white, inner sep=1. 3cm] (D) at (5, -1) {$O_f$};
				\node [rectangle, draw, thin, fill=white, inner sep=1. 3cm] (G) at (11, -1) {$Q^{(l)}$};
			\end{tikzpicture}
			\caption{Quantum Circuit}
		\end{center}
	\end{figure}

	\begin{algorithm}
		\caption{Exact quantum algorithm for the generalized Simon's problem}
		\label{alg:1}
		\begin{algorithmic}[1]
			\STATE Initial: $I^{(0)}\leftarrow\emptyset, Y=\emptyset$
			\FOR{$l\leftarrow 0:n-k-1$}
			\STATE Prepare registers $\ket{ 0^n, 0^m}$, $W^{(l)}$, $Q^{(l)}$
			\STATE Apply $W^{(l)}$ to the first register
			\STATE Apply $O_f$ to the registers
			\STATE Apply $W^{(l)}$ to the first register
			\STATE Apply $Q^{(l)}$ to the registers    \label{a1s7}
			\STATE Measure the first register, get $z^{(l+1)} \in ( S^{\perp} \cap K^{(l)} )\backslash\{0^n\}$
			\STATE Suppose $p^{(l+1)}-th$ bit of $z^{(l+1)}$is nonzero, $I^{(l+1)} \leftarrow I^{(l)} \cup \{p^{(l+1)}\}$, $Y\leftarrow Y\cup \{z^{(l+1)}\}$
			\ENDFOR
			\STATE \textbf{return} $Y=\{z^{(1)}, \cdots , z^{(n-k)}\}$
		\end{algorithmic}
	\end{algorithm}

	\begin{theorem} \label{theorem4}
		There exists an exact quantum algorithm that solve the generalized Simon's problem with $O(n-k)$ queries.
	\end{theorem}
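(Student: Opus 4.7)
The plan is to prove correctness of Algorithm~1 (it outputs a set $Y$ spanning $S^\perp$, from which $S$ is recovered by classical linear algebra) and to bound its query count by $O(n-k)$. Since the outer loop runs exactly $n-k$ times, it suffices to show that each iteration (i) uses $O(1)$ oracle queries, and (ii) deterministically produces a new element of $S^\perp$ that is linearly independent from the previously collected ones.

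First I would track the state through the quantum subroutine $\mathcal{A}_l = W^{(l)}\, O_f\, W^{(l)}$ in a single iteration. Starting from $|0^n\rangle|0^m\rangle$, the first $W^{(l)}$ produces the uniform superposition $\frac{1}{\sqrt{2^{n-l}}}\sum_{x\in K^{(l)}}|x\rangle|0^m\rangle$, since $W^{(l)}$ acts as identity on the pivot coordinates in $I^{(l)}$ and as Hadamard on the others. After the oracle, the first-register coefficients group into cosets of $S_l = S\cap K^{(l)}$ inside $K^{(l)}$ (because $f(x_1)=f(x_2)$ with $x_1,x_2\in K^{(l)}$ forces $x_1\oplus x_2\in S\cap K^{(l)}=S_l$). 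I would then perform Fourier analysis on the abelian group $K^{(l)}$ to expand the action of the second $W^{(l)}$; combining the coset structure of $S_l$ with the linear constraints defining $K^{(l)}$, the amplitudes should vanish off $K^{(l)}\cap S^\perp$ and be uniform (up to phase) on it. Consequently the probability of measuring $0^n$ equals $|K^{(l)}\cap S^\perp|^{-1} = 2^{-(n-k-l)}$, which is at most $1/2$ for every $l\le n-k-1$.

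Next I would invoke amplitude amplification via $Q^{(l)}$ to eliminate the $0^n$ outcome with certainty. Since the good-state amplitude satisfies $a\ge 1/2\ge 1/4$, Corollary~1 applies: with the prescribed phase choices $\phi,\varphi$, one application each of $\mathcal{A}_l$ and $\mathcal{A}_l^{-1}$ boosts the success probability to $1$, costing a total of $2$ queries to $O_f$ per iteration. Hence each iteration deterministically returns some $z^{(l+1)}\in (K^{(l)}\cap S^\perp)\setminus\{0^n\}$.

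Finally I would close via dimensional reduction. Because $z^{(l+1)}\in K^{(l)}$ has a pivot bit $p^{(l+1)}\notin I^{(l)}$ whereas each earlier $z^{(j)}$ has pivot $p^{(j)}\in I^{(l)}$ by construction, the updated set $\{z^{(1)},\ldots,z^{(l+1)}\}$ remains linearly independent after setting $I^{(l+1)}=I^{(l)}\cup\{p^{(l+1)}\}$. By induction, the returned set $Y$ consists of $n-k$ linearly independent vectors in $S^\perp$; since $\dim S^\perp=n-k$, it spans $S^\perp$, and $S$ is recovered classically as $\mathrm{span}(Y)^\perp$ by Gaussian elimination, with no further queries. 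The total query count is $2(n-k)=O(n-k)$. I expect the main obstacle to be the Fourier calculation establishing that the post-circuit support is exactly $K^{(l)}\cap S^\perp$ rather than the a priori larger set $K^{(l)}\cap S_l^\perp$ that one would obtain by naively applying the standard Simon analysis to $f|_{K^{(l)}}$; the interplay between $S_l$, $K^{(l)}$, and the partial Hadamard $W^{(l)}$ must be handled carefully so that the cross-coset phase sums vanish on vectors in $S_l^\perp\setminus S^\perp$.
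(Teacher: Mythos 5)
Your proposal retraces the paper's own argument step for step (restricted Hadamard $W^{(l)}$, oracle, $W^{(l)}$, exact amplitude amplification of the nonzero outcomes, dimensional reduction via pivot bits), and the one step you explicitly leave open is exactly the step that neither you nor the paper actually establishes --- and it does not come out the way you hope. Write $S_l=S\cap K^{(l)}$. For $x,x'\in K^{(l)}$ one has $f(x)=f(x')$ iff $x\oplus x'\in S\cap K^{(l)}=S_l$, so the level sets of $f$ inside $K^{(l)}$ are cosets of $S_l$, and the Fourier computation gives a first-register distribution that is uniform on $K^{(l)}\cap S_l^{\perp}$, not on $K^{(l)}\cap S^{\perp}$. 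These two sets coincide only when $S\subseteq K^{(l)}$, i.e.\ when every element of $S$ is zero on all chosen pivot coordinates, and nothing in the algorithm guarantees this: knowing $z^{(l+1)}\cdot s=0$ and $z^{(l+1)}_{p^{(l+1)}}=1$ does not force $s_{p^{(l+1)}}=0$. Concretely, take $n=3$, $k=1$, $S=\{000,110\}$. The first round can return $z^{(1)}=110$ with pivot $p^{(1)}=1$; then $S\cap K^{(1)}=\{0^n\}$, $f$ is injective on $K^{(1)}$, the second round's first register is uniform over all of $K^{(1)}$ (so the success amplitude is $a=3/4$, not the assumed $1-2^{-(n-k-l)}=1/2$, and the ``exact'' amplification is tuned to the wrong value), and the measured $z^{(2)}$ can be $010\notin S^{\perp}$, after which the reconstructed $S$ is wrong. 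The same leap occurs in the paper's own derivation: the step that re-indexes $\sum_{x\in K^{(l)}}$ as a double sum over $x\in Sup(f)\cap K^{(l)}$ and \emph{all} $s\in S$ is legitimate only if $x\oplus s\in K^{(l)}$ for every $s\in S$, i.e.\ only if $S\subseteq K^{(l)}$; likewise $|Sup(f)\cap K^{(l)}|=2^{n-k-l}$ presupposes $S\subseteq K^{(l)}$.

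So the obstacle you flagged is a genuine gap, and it cannot be closed by a more careful phase computation; the algorithm itself must be modified. The standard repair (Brassard--H{\o}yer, Cai--Qiu) is to keep the full Hadamard $H^{\otimes n}$ in every round, so that each round samples uniformly from all of $S^{\perp}$, and to amplify away the entire known subspace $\mathrm{span}\{z^{(1)},\dots,z^{(l)}\}$ rather than only $\{0^n\}$: membership in that span is a query-free classical predicate $\chi$, the bad probability is $2^{l}/2^{n-k}\le 1/2$ for $l\le n-k-1$, so Corollary~\ref{c1} applies with the correct $a$, and each round deterministically yields an element of $S^{\perp}$ outside the span of the previous ones. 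With that change the rest of your argument (constant number of queries per round, $n-k$ rounds, linear independence, recovery of $S$ as the dual of $\mathrm{span}(Y)$ by Gaussian elimination) goes through and yields the claimed $O(n-k)$ bound.
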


	\begin{proof}
		
		Let $l\in N,0\leq l \leq n-k-1$.The $lth$ loop of the algorithm is equivalent to the following formulas:
		
		(1). Prepare registers and relevant quantum circuit, the initial state is
		\begin{align*}
			\ket{0^{n}}\ket{0^{m}}
		\end{align*}
		
		(2). Apply $W^{(l)}$ to the first register
		\begin{align*}
			\xrightarrow{W^{(l)}}\quad\frac{1}{\sqrt{2^{n-l}}}\sum_{x\in K^{(l)}} \ket{x}
		\end{align*}
		
		(3). Apply $O_f$ to the registers
		\begin{align*}
			\xrightarrow{O_{f}}\quad\frac{1}{\sqrt{2^{n-l}}}\sum_{x\in K^{(l)}} \ket{x} \ket{f(x)}
		\end{align*}
		
		(4). Apply $W^{(l)}$ to the first register
		\begin{align}
			\xrightarrow{W^{(l)}}\quad & \frac{1}{2^{n-l}}\sum_{x\in K^{(l)}}\sum_{y \in K^{(l)}}(-1)^{xy}\ket{y} \ket{f(x)}                                             \label{eq4}      \\
			=                           & \frac{1}{2^{n-l}}\sum_{x \in \mathcal{T}(f)\cap K^{(l)}}\sum_{y \in K^{(l)}}[\sum_{s \in S}(-1)^{(x\oplus s) y}]  \ket{y} \ket{f(x\oplus s)} \label{eq5} \\
			=                           & \frac{1}{2^{n-l}}\sum_{x \in \mathcal{T}(f) \cap K^{(l)}}\sum_{y \in K^{(l)}}[\sum_{s \in S}(-1)^{sy}](-1)^{xy} \ket{y} \ket{f(x)}            \notag     \\
			=                           & \frac{1}{2^{n-k-l}}\sum_{x \in \mathcal{T}(f) \cap K^{(l)}}\sum_{y \in S^{\perp} \cap K^{(l)}} (-1)^{xy} \ket{y} \ket{f(x)}.       \label{eq6}
		\end{align}
	
	    In  Equation (\ref{eq5}), if $ x \in \mathcal{T}(f)$, there exist $2^{n-k}$ distinct strings mapping to $f(x)$, and these strings are in the set of
		$\{y|y=x\oplus s, s\in S\}$. Therefore, the first summation of Equation (\ref{eq4}) is divided into two parts.
		
		For Equation (\ref{eq6}), if there exists $s' \in S$ with $s' \cdot y=1$, then the value of following formula equals to zero; otherwise it will be $2^{k}$.
		\begin{align*}
			\sum_{s \in S}(-1)^{sy} & =\frac{1}{2}\sum_{s \in S}((-1)^{sy}+(-1)^{(s\oplus s')y}) \\
									& =\frac{1}{2}\sum_{s \in S}(-1)^{sy}((-1)^{s'y}+1).
		\end{align*}

		Notice that $|\mathcal{T}(f) \cap K^{(l)}|=|S^{\perp} \cap K^{(l)}|=2^{n-k-l}$, hence the first register is a uniformly superposition state that involves all the cases occurring in $ S^{\perp} \cap K^{(l)}$, and the probability of each one is $\frac{1}{2^{n-k-l}}$.
		Although we have a high probability of $a=1-\frac{1}{2^{n-k-l}}$ to get a nonzero state, there still exists some risks causing this algorithm never stops at the worst circumstance.

		By  Corollary \ref{c1}, for a given initial success probability $\frac{1}{4} \leq a < 1 $ of $Sup (f)$ and a given Boolean function  $\chi:\mathbb{Z}\to \{0, 1\} $, there exists an algorithm $Q^{(l)}$ that finds a good solution with certainty
		using applications of $\mathcal{A}$ and $\mathcal{A}^{-1}$ exactly once. Check the first condition through the following inequalities:
		\begin{align*}
			l\le n-k-1,  \   1 >a=1-\frac{1}{2^{n-k-l}}\ge \frac{1}{2}.
		\end{align*}
		
		Then, we define a  Boolean function  $\chi:\mathbb{Z}\to \{0, 1\} $ to distinguish the zero state and nonzero states:
		\begin{align*}
			\chi:{\{0, 1\}}^n \to \{0, 1\},   \    \chi(x)=0 \iff \ x=0.
		\end{align*}
		
		Therefore, the two conditions are satisfied, then the $Q^{(l)}$ used in step \ref{a1s7} can be constructed. 
		
		(5). Apply $Q^{(l)}$ to the registers
		
		Now, let us analyze the output set $Y$. After we get $z^{(l+1)}$, of which one nonzero bit $p^{(l+1)}-th$ will be added to $I^{(l+1)}$, then the next loop will output
		$z^{(l+2)}$ whose $p^{(l+1)}-th$ bit must be zero. Therefore, $z^{(l+2)}$ is linearly independent of $z^{(l+1)}$, then by induction we can draw a conclusion
		that $Y=\{z^{(1)}, \cdots , z^{(n-k)}\}$ is a linearly independent set, and $rank(Y)=n-k$. Therefore $Y$ has constructed a basis of $S^{\perp}$. Then we can calculate
		the basis of $S$ to express the whole $S$ by solving a group of linear equations.
	\end{proof}

\section{Classical query complexity for generalized Simon's problem}
    
	%In this section, we derive a lower bound $\Omega (\sqrt{k2^{n-k}})$  on the classical randomized query complexity of the generalized Simon's problem, and then 
	In this section, we show  the query complexity related to specific classical randomized algorithms. We also design a classical deterministic algorithm with $O(k\sqrt{2^{n-k}})$ queries to solve the generalized Simon's problem. We discuss a class of widely studied algorithms, i.e., non-adaptive algorithms, where each query is not allowed to depend on the result of previous queries, and derive a lower bound $\Omega (\sqrt{k2^{n-k}})$ on the non-adaptive classical deterministic query complexity. 

	\subsection{Randomized query complexity} \label{sec.4.1}

	%In this part, we give a classic randomized lower bound by inspired of Wolf's\cite{Wolf2013} concise classic randomized lower bound for Simon's problem. Before that, we introduce the main idea of his proof first. 
	
	%For better understanding, let's make a definition. 
	
	If we have queried the oracle for $T$ times, then we say $x_1, \cdots, x_T$ are good, if there exists $ i,j\leq T,\ f(x_i)=f(x_j)$, otherwise  they are bad.
	 
	By the definition of Simon's problem, $f(x_1)\neq f(x_2) \Longleftrightarrow x_1 \oplus x_2 \neq s$. It means that the exclusive OR of pair of $x_1$ and $x_2$ can not be $s$, if they do not have the same query result. Moreover, if this classic randomized algorithm repeats $l$ queries, i.e., it queries $x_1, x_2, \cdots, x_l$, and we have $\forall i,j<l, f(x_i)\neq f(x_j) $, then we know these up to $l(l-1)/2$ pairs can not deduce $s$.
	
	If we have queried for $T-1$ times and $x_1, \cdots, x_{T-1}$ are bad, that means up to $(T-1)(T-2)/2$ pairs can not deduce $s$. Next, we query $x_T$. Then it will generate at most $T-1$ new pairs, i.e., $x_1 \oplus x_T,x_2 \oplus x_T,\cdots, x_{T-1} \oplus x_T$, and we call them potential collision.

    More specifically, the conditional probability of finding a ``collision'' in $T$ queries is as follows.
	\begin{align*}
		p(x_{1}, x_{2}, \cdots, x_{T} \text{ are good}\mid x_{1}, x_{2}, \cdots, x_{T-1}\text{ are bad})\leq \frac{T-1}{2^{n}-1-\frac{(T-1)(T-2)}{2}}\leq \frac{2T}{2^{n+1}-T^2}.
	\end{align*}

	This method can deduce a lower bound $\Omega(\sqrt{2^n})$ on classical deterministic query complexity for the Simon's problem. 
	It also needs a condition that the denominator of the above fraction is positive, i.e., $2^{n+1}-l^2>0\Longrightarrow l<\sqrt{2^{n+1}}$.
	Similarly, a trivial lower bound $\Omega(\sqrt{k2^{n-k}})$  on classical deterministic query complexity for the generalized Simon's problem can be deduced by employing this method (see Theorem \ref{theorem10}). 
    
     Given  $0\leq \varepsilon <1$, we say a randomized algorithm is successful if its  error is less than $\varepsilon$. Moreover, we say a randomized algorithm consisting of $k$ randomized sub-algorithms is failed if one of these sub-algorithms succeed with probability less than $1-\varepsilon$.
    
    \begin{theorem}\label{theorem5}
		Let $0\leq \varepsilon <1$. Then for any non-adaptive classical randomized algorithm with randomized non-repetitive input from $G={\{0,1\}}^n$,  solving the generalized Simon's problem making no more than $\sqrt{k+1}\sqrt{1-\varepsilon}\sqrt{2^{n-k}}$ queries, there exists a sub-algorithm succeeding with probability not higher than $1-\varepsilon$.
	\end{theorem}

	\begin{proof}
        
        In this proof, suppose the algorithm queries a different element on each query, until it has found a linearly independent set $M=\{s_1,\cdots,s_{k}\}$, satisfying $\braket{M}=S$. Suppose the algorithm will output these $k$ periods in $M$ step by step, using no more than $T_i$ queries for each $s_i$, $1\leq i \leq k$. Then the total queries of the algorithm is no more than $T=\sum \limits_{i=1}^k T_i$. 
        
        If the lower bound for any non-adaptive classical randomized algorithm is more than $\sqrt{2^n}$, then $\sqrt{k+1}\sqrt{1-\varepsilon}\sqrt{2^{n-k}}$ is the necessary queries in this setting, and this theorem was established. Therefore, we assume $T\leq \sqrt{2^n}$ in this proof.
		
	    Since the algorithm consists of $k$ steps, each step can be regarded as a randomized sub-algorithm. Next, we will give a specific probability analysis for each step.
		
		By non-adaptive setting, the algorithm does not use the previous information to decide the next query, so the dimensional reduction in Sec 2.3 is not used in this proof.
		
		{\bfseries Step $\bm{1}$: The probability of finding $\bm{s_{1}\neq 0 }$ using no more than $T_1$ queries}
		
		Considering the case that we have queried for $r-1$ times, where $2 \leq r \leq T_1$, but they are bad, i.e., we have not found $s_1$. There are $2^k-1$ nonzero elements in $S$, and  $\left(\sum \limits_{i=1}^{r-2}i\right)$ pairs can not deduce $s_1$. The number of potential collision are at most $r-1$. 
		
		Therefore, the successful condition probability of finding $s_1$ in the $r$-th query is as follows.
		\begin{align}
			p(x_1, \cdots , x_{r}\text{ are good}|x_1, \cdots , x_{r-1}\text{ are bad})
			\leq \frac{(2^{k}-1)(r-1)}{2^{n}-1-\sum \limits_{i=1}^{r-2}i}  .
		\end{align}
		
		The probability of finding $s_{1}$ in $S$ using no more than $T_1$ queries is
			\begin{align*}
			    P^{(1)}&=p(x_1, \cdots x_{T_1} \text{are good})\\
			    &=1-p(x_1, \cdots x_{T_1} \text{are bad})\\
			    &=1-p(x_1, \cdots x_{T_1} \text{are bad}|x_1,x_2,...,x_{T_1-1}\text{are bad})\cdot p(x_1, \cdots , x_{T_1-1}\text{ are bad})\\
			    &=1- \prod_{r=2}^{T_1}\left[p(x_1,x_2,...,x_r \text{ are bad}|x_1,x_2,...,x_{r-1}\text{ are bad})\right]\\
			     &=1- \prod_{r=2}^{T_1}\left[1-p(x_1,x_2,...,x_r \text{ are good}|x_1,x_2,...,x_{r-1}\text{ are bad})\right]\\
			     &\leq 1-\left[1-
				\sum \limits_{r=2}^{T_1}p(x_1, \cdots , x_{r}\text{ are good}|x_1, \cdots , x_{r-1}\text{ are bad}) \right]\\
			    &=
				\sum \limits_{r=2}^{T_1}p(x_1, \cdots , x_{r}\text{ are good}|x_1, \cdots , x_{r-1}\text{ are bad})\\
				&\leq\sum \limits_{r=2}^{T_1}\frac{(2^{k}-1)(r-1)}{2^{n}-1-\sum \limits_{i=1}^{r-2}i}.\\
				&\leq \frac{(2^{k}-1)\sum \limits_{r=1}^{T_1-1}r}{2^n-\sum \limits_{i=1}^{T_1-1}i}\\
				&=\frac{(2^{k}-1)\cdot T_1(T_1-1)}{2^{n+1}-T_1(T_1-1)}\\
				&\leq \frac{(2^{k}-1)\cdot T_1^2}{2^{n+1}-T_1^2}\\
			\end{align*}	       When $ T_1 \leq \sqrt{2^{n-k+1}}$, it it holds that $P^{(1)}$ is less than 1, since $(2^{k}-1)\cdot T_1^2 \leq 2^{n+1}-T_1^2$ in this case.
		  When $ T_1 \leq \sqrt{1-\varepsilon}\sqrt{2^{n-k+1}} \leq \sqrt{2^{n-k+1}}$, it holds that $P^{(1)}$ is less than $1-\varepsilon$, since
			\begin{align*}
			    p^{(1)}&\leq \frac{(2^{k}-1)\cdot T_1^2}{2^{n+1}-T_1^2}\\
			           &\overset{(a)}\leq \frac{(2^{k}-1)\cdot T_1^2+T_1^2}{2^{n+1}-T_1^2+T_1^2}\\
			           &= \frac{T_1^2}{2^{n-k+1}},
			\end{align*}
			where step(a) uses the inequality scaling as follows:
			\begin{align*}
				\frac{a}{b}\leq \frac{a+c}{b+c} \text{, for any }0\leq a\leq b, c\geq 0 \text{ and } b\neq 0. \label{eq.13}
			\end{align*}
			
			Therefore, we have shown that $\sqrt{1-\varepsilon}\sqrt{2^{n-k+1}}$ queries are necessary to attain the $s_1\in S$ with probability $1-\varepsilon$, and it is also the necessary condition of $P^{(2)}$.
		
		{\bfseries Step $\bm{l}$: The conditional probability of finding $s_{l}$ using no more than $T_l$ queries, $\bm{2\leq l\leq k}$.}
		
		We assume that, before step $l$, we have found $M_{l-1}=\{s_1, \cdots, s_{l-1} \}\subseteq S$ in previous $l-1$ steps. More specifically, we have found $s_i$ in no more than $T_i$ queries, where $1 \leq i \leq l-1$. $M_{l-1}$ is a basis of subgroup of $S$, i.e., $\braket{M_{l-1}}\subseteq S$, and $|\braket{M_{l-1}}|=2^{l-1}$. 
		
		The total queries before step $l$ is no more than $T_1+ \cdots T_{l-1}$. We denote these queries as a set $\mathcal{F}$, and then $|\mathcal{F}|\leq T_1+ \cdots T_{l-1}$. We say $x_1^{(l)},x_2^{(l)},\cdots, x_{T_l}^{(l)},\mathcal{F} $ are good, if these queries can deduce a new $s_l\in S\backslash \braket{M_{l-1}}$. 
		
		Considering the case that we have queried for $r-1$ times in step $l$, where $1 \leq r \leq T_l$, but they are bad. There are $2^k-2^{l-1}$ elements in $S\backslash \braket{M_{l-1}}$, and $\left(\sum \limits_{i=1}^{|\mathcal{F}|+r-2}i\right)$ pairs can not deduce $s_l\in S\backslash \braket{M_{l-1}}$. The number of potential collision are at most $|\mathcal{F}|+r-1$. Therefore, in the non-adaptive setting, the successful condition probability of finding $s_l$ in the $r$-th  query of step $l$ is as follows.
		
		\begin{align*}
			&p\left(x_1^{(l)},x_2^{(l)},...,x_r^{(l)},\mathcal{F} \text{ are bad}|x_1^{(l)},x_2^{(l)},...,x_{r-1}^{(l)},\mathcal{F}\text{ are bad}\right)\\
			&\leq \frac{\left(2^{k}-2^{(l-1)}\right)(|\mathcal{F}|+r-1)}{2^{n}-1-\sum \limits_{i=1}^{|\mathcal{F}|+r-2}i}  .\\
			&\leq \frac{\left(2^{k}-2^{(l-1)}\right)(T_1+ \cdots T_{l-1}+r-1)}{2^{n}-1-\sum \limits_{i=1}^{T_1+ \cdots T_{l-1}+r-2}i}  .
		\end{align*}
		Then, we can give the whole conditional probability.
		\begin{align}
			\notag
			P^{(l)}
			&=p(x_1^{(l)},x_2^{(l)},\cdots, x_{T_l}^{(l)}, \mathcal{F}|\mathcal{F} \text{ are bad})\\
			\notag
			&\leq 1- \prod_{r=1}^{T_1}\left[p\left(x_1^{(l)},x_2^{(l)},...,x_r^{(l)},\mathcal{F} \text{ are bad}|x_1^{(l)},x_2^{(l)},...,x_{r-1}^{(l)},\mathcal{F}\text{ are bad}\right)\right]\\
			\notag
			&\leq \sum \limits_{r=2}^{T_1}p\left(x_1^{(l)},x_2^{(l)},...,x_{r}^{(l)},\mathcal{F}\text{ are good}|x_1^{(l)},x_2^{(l)},...,x_{r-1}^{(l)},\mathcal{F}\text{ are bad}\right)\\
			\notag
			&\leq\sum \limits_{r=1}^{T_l}\frac{(2^{k}-2^{l-1})(T_1+\cdots+T_{l-1}+r-1)}{2^{n}-1-\sum \limits_{i=1}^{(T_1+\cdots+T_{l-1})+r-2}i} \\
			\notag
			&\overset{(a)}\leq \frac{(2^{k}-2^{l-1})\sum \limits_{r=T_1+\cdots+T_{l-1}}^{T_1+\cdots+T_l-1}r}{2^n-\sum \limits_{i=1}^{T_1+\cdots+T_l-1}i} \\
            \notag
			&= \frac{(2^{k}-2^{l})\left[2(T_1+\cdots+T_{l-1})T_l+T_l(T_l-1)\right]}{2^{n+1}-{(T_1+\cdots+T_l)}^2+{(T_1+\cdots+T_l)}}   \\
			\notag
			&\leq \frac{2^{k}\left[2(T_1+\cdots+T_{l-1})T_l+{T_l}^2\right]}{2^{n+1}-{(T_1+\cdots+T_{l})}^2}         \\
			\notag
			&= \frac{1}{2^{n-k+1}}\frac{1}{1-2^{-(n+1)}\cdot{(T_1+\cdots+T_{l})}^2}\left[2(T_1+\cdots+T_{l-1})T_l+{T_l}^2\right] \\
			&\overset{(b)}\leq \frac{1}{ 2^{n-k+1}}\left[4(T_1+\cdots+T_{l-1})T_l+2{T_l}^2\right], \notag 
		\end{align}
		 where step (a) holds since $r \le (T_l +1)$; 
			step (b) holds by means of
			\begin{align*}
			    1\geq 1-2^{-(n+1)}\cdot{(T_1+\cdots+T_{l})}^2 \geq 1-2^{-(n+1)}\cdot{T}^2\geq 1-2^{-(n+1)}\cdot 2^n = \frac{1}{2}. \notag 
			\end{align*}
     When $ T_l \leq \left(\sqrt{\beta^2+\frac{1}{2}(1-\varepsilon)\cdot 2^{n-k+1}}-\beta \right)$, it holds that $P^{(l)}$ is less than $1-\varepsilon$, where $\beta= T_1+\cdots+T_{l-1}$.
		
			Based on the previous condition that $ T_1 \geq \sqrt{1-\varepsilon}\sqrt{2^{n-k+1}}$ , we can calculate the necessary extra queries to attain $s_2\in S$ that is linearly independent of $s_1$ by the conditional probability as follows:
			\begin{align*}
				P^{(2)}
			    &\leq \frac{1}{2^{n-k+1}}(4\sqrt{1-\varepsilon}\sqrt{2^{n-k+1}}T_2+2T_2^2) \\
				&\leq (1-\varepsilon)\left[4\frac{T_2}{\sqrt{1-\varepsilon}\sqrt{2^{n-k+1}}}+2\left(\frac{T_2}{\sqrt{1-\varepsilon}\sqrt{2^{n-k+1}}}\right)^2\right].
			\end{align*}
			When $ T_2 \leq (\sqrt{1+\frac{1}{2}}-1)\sqrt{1-\varepsilon}\sqrt{2^{n-k+1}}$, it holds that  $P^{(2)}$ is less than $1-\varepsilon$.
			We can use the similar methods to prove the following results:
			
			(i) For any $2\leq i \leq l$, the number of necessary queries of $T_l$ is not less than $(\sqrt{1+\frac{l-1}{2}}-\sqrt{1+\frac{l-2}{2}})\sqrt{1-\varepsilon}\sqrt{2^{n-k+1}}$, if $P^{(i)}$ is no less than $1-\varepsilon$.
	
			(ii) The number of necessary queries of successfully finding a basis of $S$ is
			\begin{align*}
				T=\sum \limits_{i=1}^k T_i\geq \sqrt{\frac{k+1}{2}}\sqrt{1-\varepsilon}\sqrt{2^{n-k+1}}=\sqrt{k+1}\sqrt{1-\varepsilon}\sqrt{2^{n-k}}.
			\end{align*}
			
			(iii) There exist a sub-algorithm whose probability of error is more than $\varepsilon$, if the number of queries is less than $\sqrt{k+1}\sqrt{1-\varepsilon}\sqrt{2^{n-k}}$.
			
			\iffalse, and the whole successful probability is 
			\begin{align*}
				P=\prod \limits_{i=1}^k P^{(i)}\leq {(1-\varepsilon)}^k.
			\end{align*}\fi
	\end{proof}
	\begin{remark}
	    For the adaptive setting, the adaptive algorithm can use the previous queries to determine which element to query next. In addition, such algorithms can exclude the number of $s \in S$ up to $\tbinom{r}{2}$ since previous $r$ bad queries can produce  $\tbinom{r}{2}$ collision pairs at most. However, there are some extra $s$ that can be eliminated implicitly by the fact that $S$ is a subgroup of rank $k$, when $k > 1$. 
	\end{remark}

\iffalse
	\begin{corollary}
		Let $0\leq \varepsilon <1$. Then for any non-adaptive classical randomized algorithm solving the generalized Simon's problem making no more than $\sqrt[2k]{1-\varepsilon}\sqrt{k+1}\sqrt{2^{n-k}}$ queries, the successful probability is less than $1-\varepsilon$.
	\end{corollary}
	\begin{proof}
		Use $1-\sqrt[k]{1-\varepsilon}$ to replace the $\varepsilon$ in the result of Theorem \ref{theorem5}.
	\end{proof}
    
	Therefore, for $\varepsilon=0$ we get a lower bound $\Omega (\sqrt{k2^{n-k}})$  on the non-adaptive classical deterministic query complexity of the generalized Simon's problem.

    \begin{theorem} \label{theorem6}
        Let $0\leq \varepsilon <1$. Then for any adaptive classical randomized algorithm solving the generalized Simon's problem making no more than $\sqrt{1-\varepsilon}\sqrt{2^{n-k+1}}$ queries, the successful probability is less than $1-\varepsilon$.
    \end{theorem}
    \begin{proof}
        In the adaptive setting, the algorithm avoid invalid queries inspired by previous result. Benefited from the dimensional reduction in Sec 2.3, $GSP(n,k)$ will degrade into $GSP(n-1,k-1)$, if the algorithm find a nonzero element from $S$.
        Repeat this procedure, the problem will degrade into $GSP(n-k+1,1)$ finally, if the algorithm have found a linearly independent set $M=
        \{s_1,···,s_{k-1}\}$. 
        
		If the set $M$ is given, then the problem is turned into a Simon's problem, whose dimension is $n-k+1$. Therefore, we got a trivial lower bound of $\Omega(2^{n-k})$
    \end{proof}
\fi
    
\subsection{Deterministic query complexity}

	In this subsection the exact upper bound is given in terms of a classical deterministic algorithm. The core idea of this algorithm is to construct several query sets to minimize the number of queries and to cover the given search space, and for any input, this algorithm can find a period before it queries all element of those query sets.
	The definition of query set and the method of construction are given as follows.
	\begin{definition}
		Let ${\mathcal{B}}={\{0, 1\}}^n$. $C_\mathcal{B}\subseteq \mathcal{B}$, and it satisfies:  $ \forall x\in \mathcal{B}, \exists y, z\in C_\mathcal{B}, x=y\oplus z$.
		We call ${\mathcal{B}}$ as a search space, and $C_\mathcal{B}$ as a query set.
	\end{definition}

	\begin{theorem} \label{theorem7}
		The cardinality of $C_\mathcal{B}$ is $\Theta(\sqrt{2^n})$.
	\end{theorem}

	\begin{proof}
		We give a method to construct a query set to prove the upper bound.

		Let $I_{front}=\{1, 2, \cdots , \lfloor{n/2\rfloor}\}$, $I_{back}=\{\lfloor{n/2\rfloor}+1, \cdots , n\}$, $\mathcal{B}_{front}=\{b=b_1\cdots b_n: \forall j\in I_{front}, b_j=0\}$,
		$\mathcal{B}_{back}=\{b=b_1\cdots b_n: \forall j\in I_{back}, b_j=0\}$. Then $C_\mathcal{B}=\mathcal{B}_{front} \cup \mathcal{B}_{back}$ is a query set for $\mathcal{B}$, and
		\begin{align*}
			|C_\mathcal{B}| = 2^{\lfloor {\frac{n}{2}\rfloor}}+2^{\lceil{\frac{n}{2}\rceil}}-1\leq 2\sqrt{2^{n+1}}.
		\end{align*}
		Now consider the lower bound. Suppose $|C_\mathcal{B}|=T$.
		Then the query set can cover up to $(T-1)T/2$ elements, i.e., $T^2\geq(T-1)T\geq 2|\mathcal{B}|=2^{n+1} \Rightarrow T\geq \sqrt{2^{n+1}}$ is the necessary number of queries.

		Therefore, the cardinality of $C_\mathcal{B}$ is $\Theta(\sqrt{2^n})$.
	\end{proof}
	\begin{theorem} \label{theorem8}
		Let $G=({\{0,1\}}^n,\oplus)$ be an Abelian group, and let $S, G_s$ be the subgroups of $G$,
		where $dim(S)=k,dim(G_s)=n-k+1,dim(G)=n$. Then $(G_{s}\cap S)\backslash\{0\}\neq \varnothing$.
	\end{theorem}
	\begin{proof}
		Suppose  there exist two bases $M=\{\alpha_1,\cdots,\alpha_k\},N=\{\beta_1,\dots,\beta_{n-k+1}\}$ for $S$, $G_s$ respectively. $|M\cup N|=n+1>dim(G)=n$, and then $M\cup N$ is a linearly dependent set satisfying
		$\exists a_1,\cdots, a_k,b_1,\cdots, b_{n-k+1}\in\{0,1\}$ with not all equal to $0$ such that $a_1\alpha_1 \oplus\cdots\oplus a_k \alpha_k=b_1\beta_1 \oplus\cdots\oplus b_k \beta_{n-k+1}$.
	\end{proof}

	The trivial method to find the basis of $S$ is to construct a query set to cover $G$, where $|G|=2^n$, and a loose upper bound is given as $O(\sqrt{2^n})$. Benefited from Theorem \ref{theorem8} and the dimensional reduction, we can get a relatively tight upper bound.
	A general comprehension of Theorem \ref{theorem8} can be described as follows: for any subgroup $G_s$ of $G$ whose dimension is $n-k+1$,  then $G_s\cap S$ has at least a nonzero element.

	Therefore, once we use a query set to cover a subgroup of $G$, whose dimension is $n-k+1$, we can get at least one nonzero period $s\in S$.
	Next, $k$ different subgroups of $G$ whose dimension is $n-k+1$ can generate $k$ nonzero periods, and we use the core idea of dimensional reduction to ensure these $k$ periods are linearly independent, and they can be constructed as a basis of $S$.

		\begin{algorithm}
		\caption{Classical deterministic algorithm for the generalized Simon's problem}
		\label{alg:2}
		\begin{algorithmic}[1]
			\STATE $I^{(1)}\leftarrow [k-1], Y\leftarrow \varnothing$.
			\FOR{$i\leftarrow1:k$}
			\STATE ${\mathcal{B}}^{(i)}\leftarrow\{x=x_1\cdots x_n|x\in{\{0, 1\}}^n, \forall j \in I^{(i)}, x_j=0\}$.
			\STATE Prepare ${\mathcal{C}_\mathcal{B}}^{(i)}$, $|{\mathcal{C}_\mathcal{B}}^{(i)}|=O(\sqrt{2^{n-k+1}})$. \\
			\STATE Find the period $s_i \ne 0$ before query all elements of ${\mathcal{C}}^{(i)}$, and suppose $p^{(i)}-th$ bit of $s_i$is nonzero.\\
			\STATE $Y\leftarrow Y\cup \{s_i\}$.\\
			\IF{$i<k$}
			\STATE $I^{(i+1)}\leftarrow I^{(i)}\cup \{p^{(i)}\} \backslash \{k-i\}$.
			\ENDIF
			\ENDFOR
			\STATE \textbf{return} $Y=\{s_1, s_2, \cdots , s_{k}\}$
		\end{algorithmic}
	\end{algorithm}

	\begin{theorem} \label{theorem9}
		There exists a classical deterministic algorithm that solves the generalized Simon's problem with $O(k \sqrt{2^{n-k}})$ queries.
	\end{theorem}

	\begin{proof}

		Consider the initial condition of $i-th$ step of algorithm\ref{alg:2}:\\
		(1). $I^{(i)}= \{p^{(1)}, \cdots , p^{(i-1)}\}\cup [k-i-1]$. \\
		(2). ${\mathcal{B}}^{(i)}=\{x=x_1\cdots x_n|x\in{\{0, 1\}}^n, \forall j \in I^{(i)}, x_j=0\}$. \\
		(3). $|I^{(i)}|=k-1$.
		
		$\mathcal{B}^{(i)}$ is a subgroup of $G$, where $dim(\mathcal{B}^{(i)})=k+1$. By Theorem \ref{theorem8}, we definitely find the period $s_i \ne 0$ before we query all element of ${\mathcal{C}}^{(i)}$, and we will get an $s_i \in {\mathcal{B}}^{(i)}$.
		Let $s_i=y_1\cdots y_n$, with $y_j\in\{0,1\}$, for any $j\leq n$. By $s_i \in \mathcal{B}^{(i)}$, then $\forall t \in I^{(i)}, y_t=0$.

		Next, we add $p^{(i)}$ to ${I}^{(i+1)}$ to insure $s_i$ linearly independent of the next periods of $s_t ,t>i$, where $p{(i)}-th$ bit of $s_i$ is nonzero.
		Repeat these procedure until $i=k$, then get $Y=\{s_1, s_2, \cdots , s_{k}\}$, and it is not difficult to check that $Y$ is a maximum linearly
		independent group, which  also consists of a basis of $S$.
		
		Now, consider the cardinality of query set in this algorithm. A trivial proof of upper bound is as follows:
		\begin{align*}
			|{\mathcal{C}_\mathcal{B}}^{(i)}|=2^{\lfloor {\frac{n-k+1}{2}\rfloor}}+2^{\lceil{\frac{n-k+1}{2}\rceil}}-1\leq 2\sqrt{2^{n-k+2}}\Longrightarrow |\bigcup \limits_{i=1}^k{\mathcal{C}_\mathcal{B}}^{(i)}|\leq 2k \sqrt{2^{n-k+2}}.
		\end{align*}
		For getting the tighter upper bound we need to consider the construction of  $\mathcal{C}_{\mathcal{B}}^{(1)},\cdots,\mathcal{C}_{\mathcal{B}}^{(k)}$, satisfying	\begin{align*}
			|\bigcup \limits_{i=1}^k{\mathcal{C}_\mathcal{B}}^{(i)}|\leq 2^{\lfloor {\frac{n-k+1}{2}\rfloor}}+2^{\lceil{\frac{n-k+1}{2}\rceil}}-1+(k-1)2^{\lceil{\frac{n-k+1}{2}\rceil}-1}\leq (k+2)\sqrt{2^{n-k+1}}.
		\end{align*}

	In the interest of readability, we give  the detailed steps for the construction  in the following.

		In the worst situation, we can only get one period in one loop of this algorithm, which means that for $i\leq k$,
		only one element $p^{(i)}$ is add to $I^{(i+1)}$, and then $I^{(i+1)}=I^{(i)}\cup p^{(1)} \backslash \{k-i\}$.
		
		There gives a method to construct query set in the proof of Theorem \ref{theorem7}, and then we can construct two similar parts
		$\mathcal{B}^{(i)}_{front}$ and $\mathcal{B}^{(i)}_{back}$
		such that $C_\mathcal{B}^{(i)}=\mathcal{B}^{(i)}_{front} \cup \mathcal{B}^{(i)}_{back}$ to cover $\mathcal{B}^{(i)}$,
		where
		\begin{align*}
			\mathcal{B}^{(i)}_{front}=\{b=b_1\cdots b_n|\forall j\in I^{(i)}_{front}, b_j=0\},
			\mathcal{B}^{(i)}_{back}=\{b=b_1\cdots b_n|\forall j\in I^{(i)}_{back}, b_j=0\}.
		\end{align*}
		The construction of the two parts depends on the two sets $I^{(i)}_{front}$ and $I^{(i)}_{back}$, where
		\begin{align*}
		I^{(i)}=I^{(i)}_{front} \cup I^{(i)}_{back},
		|I^{(i)}_{front}|=\lfloor\frac{n-k+1}{2}\rfloor,
		I^{(i)}_{back}=\lceil\frac{n-k+1}{2}\rceil.
		\end{align*}
		
		If $k-i\in I^{(i)}_{front}$, then
		\begin{align*}
			I^{(i+1)}_{front}=I^{(i)}_{front}\cup p^{(i)}\backslash \{k-i\}, I^{(i+1)}_{back}=I^{(i)}_{back}.
		\end{align*}
		
		The corresponding set $\mathcal{B}^{(i)}_{back}$ can be reused as $\mathcal{B}^{(i+1)}_{back}$. Notice that
		\begin{align*}
			\mathcal{B}^{(i)}_{front} \cap \mathcal{B}^{(i+1)}_{front}=\{b=b_1\cdots b_n|\forall j\in I^{(i)}_{front}\backslash\{k-i\}, b_j=0\},
		\end{align*}
		\begin{align*}
			|C_\mathcal{B}^{(i+1)}\backslash C_\mathcal{B}^{(i)}|=|\mathcal{B}^{(i+1)}_{front} \backslash \mathcal{B}^{(i)}_{front}|=|\mathcal{B}^{(i)}_{front} \cap \mathcal{B}^{(i+1)}_{front}| =2^{\lfloor\frac{n-k+1}{2}\rfloor-1}.
		\end{align*}
		
		In a similar way, $|C_\mathcal{B}^{(i+1)}\backslash C_\mathcal{B}^{(i)}|=2^{\lceil\frac{n-k+1}{2}\rceil-1}$, if $k-i\in I^{(i)}_{back}$.Therefore
		\begin{align*}
		\forall j<l,|C_\mathcal{B}^{(j+1)}\backslash C_\mathcal{B}^{(j)}|\leq 2^{\lceil\frac{n-k+1}{2}\rceil-1}\leq 2^{\frac{n-k+1}{2}},
		\end{align*}
		
		and we get the result as follow:
		\begin{align*}
			|\bigcup \limits_{i=1}^k{\mathcal{C}_\mathcal{B}}^{(i)}|\leq 2^{\lfloor {\frac{n-k+1}{2}\rfloor}}+2^{\lceil{\frac{n-k+1}{2}\rceil}}-1+(k-1)2^{\lceil{\frac{n-k+1}{2}\rceil}-1}\leq (k+2)\sqrt{2^{n-k+1}}.
		\end{align*}
	\end{proof}    
		
	\begin{theorem} \label{theorem10}
		Any non-adaptive classical deterministic algorithm that solves the generalized Simon's problem requires $\Omega(\sqrt{k2^{n-k}})$ queries.
	\end{theorem}
    \begin{proof}
        By theorem \ref{theorem8}, for any subgroup $G_s$ of rank $n-k+1$,  there exists at least one element of $S$ in this subgroup,
        and in the worst situation there exists only one.
         The classical deterministic algorithm is successful if the $k$ periods of $S$ found from these subgroups are linearly independent.
          Otherwise, it needs extra queries. In order to get $k$ periods of $S$ to form a basis, 
        the necessary range that query set needs to cover is at least $k$ different subgroups, 
        which can be denoted by $G_1,\cdots, G_k$, and generate $s_i\in G_i$ for any $i\leq k$.
        The queries are minimal if these periods are linearly independent.
        
        Then in this case, there exists a set $M=\{\alpha_1,\cdots,\alpha_{n-k}\}$ such that $M\cup\{s_i\}$ is a basis of $G_i$, $\braket{M\cup\{s_i\}}=G_i$, for any $i\leq k$. Moreover, $\braket{M\cup \{s_1,\cdots, s_k\}}=G$.
        
        For any $i,j\leq k, i\neq j$, $s_i,s_j$ are linearly independent, and then $|G_i\backslash G_j|\geq |N_i|=2^{n-k}$, where
        \begin{align*}
            N_i=\left\{x\big|x=a_1\alpha_1\oplus\cdots\oplus a_{n-k}\alpha_{n-k}\oplus s_i, \text{for any } a_1,\cdots,a_{n-k}\in\{0,1\}\right\}.
        \end{align*}
    
        Therefore, we get the minimal queries as follows:
        \begin{align*}
            |\bigcup \limits_{i=1}^k{G_i}|\geq \sum \limits_{i=1}^{k}|N_i|=k2^{n-k}.
        \end{align*}
        
        Suppose $|C_\mathcal{B}|=T$, and then the query set can cover $(T-1)T/2$ elements at most. Furthermore, the query set needs to cover $\bigcup \limits_{i=1}^k{G_i}$, and their cardinality is at least $k2^{n-k}$.
        So, it is  required that $T^2\geq(T-1)T\geq 2|\bigcup \limits_{i=1}^k{G_i}|\geq k2^{n-k+1}$. That is to say,   the necessary number $T$ of queries  satisfies $ T\geq \sqrt{k2^{n-k+1}}$ .
          \end{proof}
		%The result of Theorem {\ref{theorem10}} is consistent with the lower bound for randomized algorithm of $\Omega(\sqrt{k2^{n-k}})$. 
		The non-adaptive classical deterministic query complexity for the generalized Simon's problem is
		$\Omega(\sqrt{k2^{n-k}})\sim O(k\sqrt{2^{n-k}})$. Hence the optimal construction that can attain this lower bound has still not been solved. So, it remains open for getting the optimal non-adaptive classical deterministic query complexity of the generalized Simon's problem.

\section{Conclusions}\label{Sec6}

	Simon's problem is a computational problem that can be solved exponentially faster on a quantum computer than on a classical  computer  \cite{Simon1994, Simon1997}. 
	The algorithm for this problem was also an  inspiration for Shor's algorithm \cite{Shor1994, Shor1997}. 
	The optimal separation between  the exact quantum query complexity and classical deterministic query complexity for Simon's problem was proved in \cite{CQ2018}. 
	The generalized Simon's problem was proposed in \cite {KLM2007}, but the optimal exact quantum query complexity and classical deterministic query complexity for the generalized Simon's problem were not clear. 
	So, in this paper, we have  tried to obtain a number of results related to these problems.%our purposes are to solve these problems.

	More specifically,  we have given an exact quantum algorithm for solving this problem with $O(n-k)$ queries, and we have also shown that the lower bound on its exact quantum  query complexity is $\Omega(n-k)$. 
	Therefore, we have obtained the optimal exact quantum query complexity  $\Theta(n-k)$ for the generalized Simon's problem.
	
	For the classical complexity, we have given a non-adaptive classical deterministic algorithm with $O(k\sqrt{2^{n-k+1}})$  queries for solving the generalized Simon's problem. 
	Furthermore, we have shown that the lower bound on its non-adaptive classical deterministic query complexity  is $\Omega(\sqrt{k2^{n-k}})$. 
	
	Therefore,  the optimal non-adaptive classical deterministic query complexity for  the generalized Simon's problem is still to be solved further in the future. Another problem is to study the generalized Simon's problem with	
	 adaptive algorithms.
	
\section*{Acknowledgements}
	The authors  would like to thank the anonymous referee for important comments and suggestions that help us improve the quality of the manuscript. Also, we  would thank  Koiran for useful comments on the lower bound on the quantum query complexity of the generalized Simon's problem. This work  is  partly supported by the National Natural Science Foundation of China (Nos. 61572532,  61876195), the Natural Science
	Foundation of Guangdong Province of China (No. 2017B030311011).

\begin{appendix}

\end{appendix}

\end{document}